\documentclass[11pt]{article}
\usepackage[utf8]{inputenc}
\usepackage[english]{babel}
\usepackage[active]{srcltx}
\usepackage{hyperref}
\usepackage[all]{xy}
\usepackage{amsfonts}
\usepackage{amsthm}
\usepackage{enumerate}
\usepackage{mathrsfs}
\usepackage{multicol}
\usepackage[all]{xy}
\usepackage{amsfonts}
\usepackage{latexsym,amsmath,mathtools,amsthm,amssymb}
\usepackage{lmodern}
\usepackage[T1]{fontenc}
\usepackage[usenames,dvipsnames]{color}
\usepackage{xfrac}
\usepackage{pstricks}
\usepackage[absolute,overlay]{textpos}
\usepackage{graphics,wrapfig,times}

\usepackage{color}
\definecolor{DarkBlue}{rgb}{0.1,0.1,0.5}
\definecolor{Red}{rgb}{0.9,0.1,0.1}
\definecolor{Green}{rgb}{0.3,0.7,0.0}
\definecolor{green2}{rgb}{0.1,0.7,0.2}
\definecolor{blue2}{rgb}{0.0,0.6,0.7}
\definecolor{pink}{rgb}{1,0.0,1}
\definecolor{orange}{rgb}{0.9,0.0,0.1}

\newtheorem{theo}{Theorem}

\newtheorem{lemma}{Lemma}
\newtheorem{prop}{Proposition}

\newtheorem{definition}{Definition}

\newcommand{\ep}{\varepsilon}

\def\sode{\textsc{sode }}
\def\sopde{\textsc{sopde }}

\newcommand{\R}{\mathbb{R}}      

\renewcommand{\d}{\mathrm{d}}
\newcommand{\derpar}[2]{\displaystyle\frac{\partial{#1}}{\partial{#2}}}

\newcommand{\Lag}{\mathcal{L}}
\newcommand{\Leg}{\mathcal{FL}}
\newcommand{\vf}{\mathfrak{X}}
\newcommand{\df}{\Omega}

\newcommand{\Tan}{\mathrm{T}}
\newcommand{\inn}{{\mathop{i}\nolimits}}
\newcommand{\Lie}{\mathop{\mathrm{L}}\nolimits}

\newcommand{\bal}{\begin{align*}}
\newcommand{\eal}{\end{align*}}
\def\beq{\begin{equation}}
\def\eeq{\end{equation}}
\def\bea{\begin{eqnarray}}
\def\eea{\end{eqnarray}}
\def\beann{\begin{eqnarray*}}
\def\eeann{\end{eqnarray*}}
\def\ben{\begin{enumerate}}
\def\een{\end{enumerate}}
\def\bit{\begin{itemize}}
\def\eit{\end{itemize}}
\def\ds{\displaystyle}


\def\dst{\displaystyle}
\def\ep{\vspace{5pt}}

\def\vf{\mathfrak X}
\def\df{{\mit\Omega}}
\def\Lag{{\cal L}}

\def\d{{\rm d}}

\def\R{\mathbb{R}}

\def\cL{\mathcal{L}}

\def\cF{\mathcal{C}^\infty}
\def\cE{\mathcal{E}}
\def\cS{\mathcal{S}}

\def\x{\textnormal{q}}
\def\v{\textnormal{v}}

\def\bX{\mathbf{X}}
\def\bY{\mathbf{Y}}

\def\bV{\mathbf{V}}
\def\bG{\mathbf{\Gamma}}
\def\bD{\mathbf{\Gamma}}
\def\bOm{\mathbf{\Omega}}

\def\fX{\mathfrak{X}}
\def\fM{\mathfrak{M}}

\def\Tan{{\rm T}}
\def\Lie{\mathop{\rm L}\nolimits}
\def\inn{\mathop{i}\nolimits}
\def\Cinfty{{\rm C}^\infty}

\def\Ker{\textnormal{Ker }}

\textwidth= 16cm \textheight= 24,5cm
\hoffset=-2cm \voffset=-2,5cm


\parskip=5pt

\title{\sc The second-order problem for $k$-presymplectic Lagrangian field theories.
Application to the Einstein--Palatini model}
\author{\sffamily 
\sc $^a$David Adame--Carrillo
\thanks{david.adamecarrillo@aalto.fi.} ,
$^b$Jordi Gaset
\thanks{jordi.gaset@uab.cat\,({\it ORCID}:\,0000-0001-8796-3149).} ,
$^c$Narciso Rom\'an--Roy
\thanks{narciso.roman@upc.edu\,({\it ORCID}:\,0000-0003-3663-9861).} .
\\[1ex]
\normalsize\itshape\sffamily 
$^a$Department of Mathematics and Systems Analysis, Aalto University, Espoo, Finland.
\\[1ex]
\normalsize\itshape\sffamily 
$^b$Department of Physics,
Universitat Aut\`onoma de Barcelona,
Bellaterra, Spain.
\\[1ex]
\normalsize\itshape\sffamily 
$^c$Department of Mathematics,
Universitat Polit\`ecnica de Catalunya,
Barcelona, Spain.
}

\date{February 1, 2022}


\begin{document}

\maketitle

\pagestyle{myheadings}

\thispagestyle{empty}

\begin{abstract}
In general, the system of $2$nd-order partial differential equations made of the Euler--Lagrange equations of classical field theories are not compatible for singular Lagrangians. This is the so-called {\sl second-order problem}. The first aim of this work is to develop a fully geometric constraint algorithm which allows us to find a submanifold where the Euler-Lagrange equations have solution,
and split the constraints into two kinds depending on their origin. We do so using $k$-symplectic geometry, which is the simplest intrinsic description of classical field theories. 
As a second aim, the {\sl Einstein--Palatini} model of General Relativity 
is studied using this algorithm.
\end{abstract}

 \bigskip
\noindent
  {\bf Key words}:  Classical field theories, $k$-symplectic manifolds, Lagrangian  formalism, Einstein-Palatini model.


\vbox{\raggedleft AMS s.\,c.\,(2020): 
{\it Primary\/}: 53D42, 70S05, 83C05. 
{\it Secondary\/}: 53C15, 35Q76, 53Z05, 55R10.}\null

\markright{{\rm D. Adame--Carrillo, J. Gaset, N. Rom\'an--Roy},
    {\sl $k$-presymplectic field theories. Einstein--Palatini model.}}


\setcounter{tocdepth}{2}
{
\small
\def\addvspace#1{\vskip 1pt}
\parskip 0pt plus 0.1mm
\tableofcontents
}


\section{Introduction}

It is an established fact that {\sl symplectic geometry} is the most suitable
geometric framework to describe Lagrangian and Hamiltonian (autonomous) mechanics \cite{AM-fm,Ar,LM-sgam}.
As classical field theories appear many times as a generalisation of mechanics, 
it seems very natural to try to describe classical field theories with a generalization of symplectic geometry.
One of the most generic and complete approaches is the {\sl multisymplectic description},
where jet bundles and bundles of forms are used as the manifolds where 
the Lagrangian and the Hamiltonian formalisms take place 
(see, for instance, \cite{RomRoy} and the references therein).
However, the simplest approach is the {\sl $k$-symplectic description}
\cite{Awane,mt2,Gu-87},
which is used to describe systems in field theory whose Lagrangian and Hamiltonian  functions
depend only on the fields and derivatives of them, 
or their associate momenta in the Hamiltonian formulation. 
In the Lagrangian formalism, this formulation takes place 
in the $k$-tangent bundle of some manifold $Q$, which is denoted $\Tan^1_kQ$ and,
in the Hamiltonian formalism, in the $k$-cotangent bundle $(\Tan^1_k)^*Q$. 
These bundles are the Whitney sum of $k$ copies of the tangent bundle $\Tan Q$ and the cotangent bundle $\Tan^*Q$, respectively \cite{dLSV,fam,RRS}.

The $k$-symplectic framework is well suited to dealing with regular Lagrangians. 
Nevertheless, many of the relevant models in physics feature singular Lagrangians
and this leads to work in the so-called {\sl $k$-presymplectic} framework. 
Singular systems are important  because some of 
the most important physical theories are singular; for instance, 
Maxwell's electromagnetism, all the models in General Relativity, 
string theory and gauge theories in general. 
The main problem of these singular theories is the failure of the 
usual theorems for the existence of solutions 
of the differential equations which describe them. 
This problem is usually solved by applying suitable {\sl constraint algorithms}
which allow us to find a submanifold of the phase space 
of the system where the existence of solutions is assured.

 The first of these constraint algorithms was given by P.G. Bergmann and P.A.M. Dirac, using a local coordinate language, 
 for the Hamiltonian formalism of singular mechanics \cite{AB-51,Di-50}. 
 Geometric versions of this algorithm were developed later, 
 both for the Hamiltonian and Lagrangian formalisms of autonomous mechanical systems \cite{got79,got78,GP-92,MMT-97} and also 
 for non-autonomous systems \cite{chi94,dLe02,GM2005}.
 Furthermore, the problem of the compatibility of the Hamiltonian field equations ($1$st-order PDE’s) for singular field theories has already been solved 
 in the (pre)multisymplectic \cite{dLe96B,dLe05} and the $k$-presymplectic and $k$-precosymplectic frameworks \cite{GMR,GRR}. 

 One of the characteristic features of the Lagrangian formalism is that 
 physical and variational motivations demand that 
 the equations that describe the behaviour of the system 
 must be ordinary second-order differential equations (\textsc{sode}), 
 in the case of mechanics, and second-order partial differential equations
 (\textsc{sopde}), in field theories: the {\sl Euler--Lagrange equations}. 
 In the above mentioned geometric descriptions of mechanics and field theories, 
 solutions to the equations of the system are represented by vector fields and $k$-vector fields, respectively. 
 Then, the physical solutions are the integral curves or the integral sections to these vector and $k$-vector fields, 
 which verify the Euler--Lagrange equations. 
 But, in order to assure this last fact, the vector and $k$-vector fields must fulfil that 
 their integral curves and sections must be {\sl holonomic}; that is, canonical liftings of curves and maps to the bundles where they are defined
 \cite{Crampin,SCC-84,Sa-89}. 
 This is the so-called {\sl second-order condition} and the vector fields and $k$-vector fields fulfilling this condition 
 are called \textsc{sode}'s and \textsc{sopde}'s, respectivelly.

 In the regular case  the second-order condition holds for every solution to the geometrical equations of the system; 
 whereas in the case of singular Lagrangian systems these solutions, if they exist, 
 do not satisfy this condition, in general, 
 and hence it is an additional problem to study, 
 besides the existence of solutions. 
 One of the first geometric analysis of this problem 
 for singular Lagrangian (autonomous) mechanics was done in \cite{got80}, 
 where the authors found a submanifold of the velocity phase space 
 where a {\sc sode} solution exists; although, in general, this is not a maximal submanifold
 and the procedure to obtain it is not algorithmic. 
 A complete and constructive algorithm for finding a maximal submanifold was developed in \cite{BGPR,BGPR2}, 
 but the treatment is local-coordinate. 
 The procedure was intrinsically reformulated (partially) later in \cite{CLR,CLR2}, 
 but a complete geometric algorithm is given, for the first time, in \cite{MR} where, given a singular Lagrangian system $(\Tan Q;\cL)$, 
 an algorithm that finds the maximal submanifold of $\Tan Q$ on which one can find \sode solutions was developed. 
 Later, this algorithm was extended to singular non-autonomous Lagrangians \cite{dLe02}.

 Nevertheless, an equivalent geometric algorithm for solving 
 the second-order problem for singular Lagrangian field theories 
 is not known yet. 
 The first part of this work is an attempt to solve the problem for singular Lagrangians in the $k$-symplectic formulation. 
 Thus, given a singular Lagrangian in $\Tan^1_kQ$,
 we provide an algorithm that  allows us to find the maximal submanifold of 
$\Tan^1_kQ$ on which we can find \sopde solutions to the Lagrangian field equations.
In addition, the emergent constraints are classified into two groups,
depending on whether they are a consequence of the compatibility of field equations 
or the requirement that solutions verify the second order condition.
This algorithm is the generalization of what is presented in \cite{MR}
for singular Lagrangian mechanics to $k$-presymplectic Lagrangian systems,
and completes the algorithm presented in \cite{GMR}, where the second-order
problem was not considered.

As a very interesting application of this method we study
the $k$-symplectic description of the {\sl Einstein--Palatini Lagrangian model} 
of General Relativity \cite{Einstein,Pa-19,FFR-82}.
This system, which is also known as the {\sl metric-affine model},
consists in considering the {\sl Hilbert Lagrangian} for the {\sl Einstein equations},
but taking an arbitrary connection instead of the Levi--Civita connection
associated with the metric.
The resulting Lagrangian is affine and then singular;
thus, as a previous step, it is very useful to make a preliminary analysis on the characteristics of the affine Lagrangians in general.
In particular, the constraint algorithm for the Einstein-Palatini Lagrangian
gives different kinds of constraints, and
the results obtained here are discussed and compared with those obtained
in the multisymplectic description of this model \cite{GR-2018}.

The organisation of the paper is the following: 
Section 2 is devoted to review the main features on the $k$-symplectic approach to Lagrangian and Hamiltonian field theories. 
In Section 3, we present the generalisation of the constraint algorithm 
to $k$-symplectic singular Lagrangian systems. 
Finally, Section 4 is devoted to apply this method to
affine Lagrangians in general and,
in particular, to the Einstein--Palatini model of General Relativity.

Throughout the text we use the summation convention for repeated crossed tensorial indices. All the manifolds are real, second countable and $\Cinfty$.
Manifolds and mappings are assumed to be smooth.

\section{$k$-symplectic field theories}

In this Section we review the $k$-symplectic description of the
Lagrangian formalism of classical field theory
following the presentation given in \cite{dLSV}.

\subsection{$k$-symplectic geometry, $k$-tangent bundles and geometric structures}

\begin{definition} Let $M$ be a manifold of dimension $n(k+1)$,
$(V;\pi)$ a $nk$-dimensional integrable distribution
and $\omega^1,\ldots,\omega^k$ a family of closed differentiable 
$2$-forms on $M$. We say that $(\omega^1,\ldots,\omega^k;V)$ is a \textbf{$k$-symplectic structure} on $M$ if
\begin{enumerate}[(i)]
\setlength{\itemindent}{+1cm}
\item $\omega^\alpha\big|_{V\times V}=0$, $1\leq\alpha\leq k$,
\item $\bigcap_{\alpha=1}^{k}\Ker \omega^\alpha=0$.
\end{enumerate}
Then $(M;\omega^1,\ldots,\omega^k;V)$ is said to be \textbf{$k$-symplectic manifold}.
If some of the conditions in the above definition are not satisfied then 
$(M;\omega^1,\ldots,\omega^k,V)$ is called a \textbf{$k$-presymplectic manifold} 
and, similarly, $(\omega^1,\ldots,\omega^k)$ is a \textbf{$k$-presymplectic structure} on $M$.
\end{definition}

A simple example of a $k$-presymplectic manifold is a submanifold of a $k$-symplectic manifold, where the pull-back of the $2$-forms of the $k$-symplectic structure by the embedding map yields the $2$-forms of the $k$-presymplectic structure in the submanifold.
Furthermore, the canonical model for $k$-symplectic manifolds is
the \emph{\textbf{$k$-cotangent bundle}} of a manifold $Q$,
which is the Whitney sum of $k$ copies of the cotangent bundle $\Tan^*Q$; that is,
$(\Tan^1_k)^*Q\coloneqq\Tan^*Q\oplus_{{ }_Q} \overset{k}{\dotsb}\oplus_{{ }_Q} \Tan^*Q$,
which is endowed with the natural projections
$$
\tau\colon (\Tan^1_k)^*Q\longrightarrow Q
\quad , \quad
\tau^\alpha\colon (\Tan^1_k)^*Q\longrightarrow\Tan^*Q \ .
$$
Then, if $\omega_o\in\df^2(\Tan^*Q)$ denotes the canonical $2$-form in the cotangent bundle,
we can construct a canonical $k$-symplectic structure in $(\Tan^1_k)^*Q$
by taking $\omega^\alpha:=(\tau^\alpha)^*\omega_o$,
and being $V$ the vertical subbundle of $(\Tan^1_k)^*Q$ for the natural projection $(\Tan^1_k)^*Q\to Q$.\ep

Given a $n$-dimensional manifold $Q$, the \emph{\textbf{$k$-tangent bundle}} of $Q$ 
(which is also called the {\sl tangent bundle of $k^1$-velocities\/} of $Q$) 
is defined as the Whitney sum of $k$ copies of the tangent bundle $\Tan Q$; that is,
$\Tan^1_kQ\coloneqq\Tan Q\oplus_{{ }_Q} \overset{k}{\cdots}\oplus_{{ }_Q}\Tan Q$.
Points in $\Tan^1_kQ$ are denoted $(q;v_{q1},\ldots,v_{qk})\equiv{\bf v}_q$, where $q\in Q$ and $v_{q\alpha}\in\Tan_qQ$.
If $({\rm q}^i)$ are coordinates in $Q$, then natural coordinates in $\Tan Q$ 
are denoted $({\rm q}^i,\v^i_1,\ldots,\v^i_k)$, $1\leq i\leq n$.
The bundle $\Tan^1_kQ$ is endowed with the natural projections
$$
\pi\colon\Tan^1_kQ\to Q \quad , \quad
\pi^\alpha\colon\Tan^1_kQ\to\Tan Q \ .
$$

Given a map $\Phi\colon Q_1\longrightarrow Q_2$,
the {\sl $k$-extension} of $\Phi$ to $\Tan^1_kQ_1$
is the map $\Tan^1_k\Phi\colon\Tan^1_kQ_1\longrightarrow\Tan^1_kQ_2$ defined by
$\Tan^1_k\Phi(q;v_{q1}\ldots,v_{qk})\coloneqq (\Phi(q);\Tan_q\Phi(v_{q1}),\ldots,\Tan_q\Phi(v_{qk}))$.

In the same way, the {\sl prolongation} of a map $\psi:U\subseteq\R^k\longrightarrow Q$ to $\Tan^1_kQ$ is the map
\begin{equation*}
\begin{split}
\tilde{\psi}\ :\ U\subseteq\R^k\ & \longrightarrow\Tan^1_kQ\\
x\ & \longmapsto\ \tilde{\psi}(x)\coloneqq\Big(\psi(x);\psi_*\Big(\frac{\partial}{\partial t^1}\Big|_x\Big),\ldots,\psi_*\Big(\frac{\partial}{\partial t^k}\Big|_x\Big)\Big)
\end{split}
\end{equation*}
where $t^1,\ldots,t^k$ are the natural coordinates of $\R^k$.
Then $\tilde\psi$ is said to be {\sl holonomic}.

If $v\in\Tan_qQ$ and ${\bf v}_q\in\Tan^1_kQ$,
the {\sl vertical $\alpha$-lift} of $v$ to $\Tan_{{\bf v}_q}(\Tan^1_kQ)$ is defined by
the expression
$\ds (v)^{\wedge\alpha}_{{\bf v}_q}\coloneqq\frac{d}{dt}\bigg|_{t=0}\big(q;v_{q1},\ldots,v_{q\alpha}+tv,\ldots,v_{qk}\big)$.
In natural coordinates, if
$\ds v=v^i\frac{\partial}{\partial\x^i}\bigg|_{q}$, then 
$\ds (v)^{\wedge\alpha}_{{\bf v}_q}=
v^i\frac{\partial}{\partial\v^i_\alpha}\bigg|_{{\bf v}_q}$.

The \emph{\textbf{vertical endomorphisms}} of $\Tan^1_kQ$ are the
$(1,1)$-tensor fields $J^\alpha$ on $T^1_kQ$ given by
\begin{equation*}
\begin{split}
(J^\alpha)_{{\bf v}_q}\colon\Tan_{{\bf v}_q}(\Tan^1_kQ)\ & \longrightarrow\Tan_{{\bf v}_q}(\Tan^1_kQ)\\
{\bf u}\ \ \ \ \ \ \ & \longmapsto\ \big(\Tan_1^k\pi({\bf u})\big)^{\wedge\alpha}_{{\bf v}_q}
\end{split} \ .
\end{equation*}
In natural coordinates, they are given by
$\ds J^\alpha=\frac{\partial}{\partial \v^i_\alpha}\otimes d\x^i$.

The \emph{\textbf{Liouville vector field}} $\Delta$ in $\Tan^1_kQ$ is defined as
$\ds\Delta_{{\bf v}_q}\coloneqq\sum_{\alpha=1}^k(v_\alpha)^{\wedge\alpha}_{{\bf v}_q}$,
for ${\bf v}_q\in\Tan^1_kQ$.
We also define the vector fields $\Delta_\alpha$ by 
$(\Delta_\alpha)_{{\bf v}_q}\coloneqq(v_\alpha)^{\wedge\alpha}_{{\bf v}_q}$ so that $\Delta=\Delta_1+\cdots+\Delta_k$.
The Liouville vector field is the generator of dilatations, i.e., 
its flow is given by the curves $(q;tv_1,\ldots,tv_k)$.
In natural coordinates, we have
$\ds\Delta=\sum_{\alpha=1}^k\v^i_\alpha\frac{\partial}{\partial\v^i_\alpha}$.

\begin{definition} A \textbf{$k$-vector field} in $Q$ is 
a section of the canonical projection 
$\pi\colon\Tan^1_kQ\longrightarrow Q$; that is, a map
$\bX\colon Q\longrightarrow\Tan^1_kQ$ such that $\pi\circ\bX={\rm Id}_Q$.
\end{definition}

Equivalently, a $k$-vector field is determined by $k$ vector fields $X_1,...,X_k\in\fX(Q)$ defined by $\pi^{\alpha}\circ\bX=X_{\alpha}$. 
Thus we write $\bX=(X_1,\ldots,X_k)$.
The set of $k$-vector fields on $Q$ is denoted by $\fX^k(Q)$.

An {\sl integral section} $\psi$ of $\bX\in\fX^k(Q)$ passing through 
$q\in Q$ is a map $\psi:U\subseteq\R^k\longrightarrow Q$, with $0\in U$, 
such that $\psi(0)=q$, and $\ds\psi_*\bigg(\frac{\partial}{\partial t^\alpha}\Big|_x\bigg)=(X_\alpha)_{\psi(x)}$; for every $x\in U$.
A $k$-vector field $\bX\in\fX^k(Q)$ is {\sl integrable} 
if there exists an integral section of $\bX$ passing through each point $q\in Q$. Then, in coordinates, if
$\ds X_\alpha=(X_\alpha)^i\frac{\partial}{\partial\x^i}+\sum_{\beta=1}^k(X_\alpha)^i_\beta\frac{\partial}{\partial \v^i_\beta}$,
we have that $\psi$ satisfies the system of differential equations
$$
(X_\alpha)^i\circ\psi=\frac{\partial(\x^i\circ\psi)}{\partial t^\alpha},\ \ \ \ \ (X_\alpha)^i_\beta\circ\psi=\frac{\partial(\v^i_\beta\circ\psi)}{\partial t^\alpha} \ .
$$
(Notice that, despite their name, in the $k$-symplectic formulation
these maps $\psi$ are not sections of any bundle projection).

\begin{definition}
\label{secondpd}
A  \textbf{second-order partial differential equation} ({\sc sopde}) is a $k$-vector field $\bX$ in $T^1_kQ$
(that is, a section $\bX\colon \Tan^1_kQ\longrightarrow\Tan^1_k(\Tan^1_kQ)$ of the projection
$\pi_{\Tan^1_kQ}\colon\Tan^1_k(\Tan^1_kQ)\longrightarrow \Tan^1_kQ$ 
which is also a section of $T^1_k\pi\colon\Tan^1_k(\Tan^1_kQ)\longrightarrow\Tan^1_kQ$.
\end{definition}

In natural coordinates, a \textsc{sopde} $\bX=(X_1,\ldots,X_k)$ has the expression
\begin{equation}\label{sopde-local}
X_\alpha=\v^i_\alpha\frac{\partial}{\partial\x^i}+\sum_{\beta=1}^k(X_\alpha)^i_\beta\frac{\partial}{\partial \v^i_\beta} 
\quad ; \quad (X_\alpha)^i_\beta\in\Cinfty(T^1_kQ) \ .
\end{equation}
Note that $\bX$ is a \textsc{sopde} if, and only if, $\ds\sum_{\alpha=1}^kJ^\alpha(X_\alpha)=\Delta$.

If $\bX=(X_1 ,\ldots,X_k) $ is an integrable {\sc sopde}, then a map $\psi:\R^k \to \R^k\times T^1_kQ$, given by
$\psi(x)=(\psi^\alpha(x),\psi^i(x),\psi^i_\alpha(x))$, is an integral section
of $\bX$ if, and only if, its components are
solution to the system of second order partial differential equations
\begin{equation}\label{nn}
\psi^\alpha(x)=x^\alpha \quad , \quad \psi^i_\alpha(x)=\frac{\displaystyle\partial \psi^i}
{\displaystyle\partial x^\alpha}(x) \quad , \quad \frac{\displaystyle\partial^2 \psi^i}
{\displaystyle\partial x^\alpha \displaystyle\partial x^\beta}(x)=(X_\alpha)^i_\beta(\psi(x)) \ ,
\end{equation}
and thus, $\psi$ is a holonomic section.
Observe that if $\bX$ is integrable, from
(\ref{nn}) we deduce that $(X_\alpha)^i_\beta=(X_\beta)_\alpha^i$.
This fact justifies the name \sopde for these kinds of $k$-vector fields; 
although the second order equations only refer
really to their integrable sections and then to
integrable $k$-vector fields. When these $k$-vector fields satisfy the condition of the definition \ref{secondpd} but they are not integrable,
they are also called {\sl \textbf{semiholonomic $k$-vector fields}},
and \sopde $k$-vector fields which are integrable are called 
{\sl \textbf{holonomic $k$-vector fields}}.

\subsection{$k$-symplectic Lagrangian field theory}

Let $\cL\in\Cinfty(\Tan^1_kQ)$ be a \emph{Lagrangian function}.

\begin{definition} 
The \textbf{Cartan 1-forms} $\theta^\alpha_\cL\in\df^1(\Tan^1_kQ)$ are defined by
$\theta^\alpha_\cL\coloneqq(J^\alpha)^*\d\Lag$.

The \textbf{Cartan 2-forms} $\omega^\alpha_\cL\in\df^2(\Tan^1_kQ)$ are defined by
$\omega^\alpha_\cL\coloneqq-\d\theta^\alpha_\cL$.

The \textbf{energy Lagrangian function} is
$\cE_\cL\coloneqq\Delta(\cL)-\cL\in\Cinfty(\Tan^1_kQ)$.
\end{definition}

In local coordinates
\bea
\label{PC-local}
\theta^\alpha_\cL&=&\frac{\partial\cL}{\partial\v^i_\alpha}d\x^i  \quad , \quad
\omega^\alpha_\cL \ =\ \frac{\partial^2\cL}{\partial\x^j\partial\v^i_\alpha}d\x^i\wedge d\x^j+\sum_{\beta=1}^{k}\frac{\partial^2\cL}{\partial\v^j_\beta\partial\v^i_\alpha}d\x^i\wedge d\v^j_\beta \ , \\
\nonumber \cE_\cL &=&
\sum_{\alpha=1}^{k} v^i_\alpha\derpar{\cL}{v^i_\alpha}-\Lag(q^j,v^j_\beta) \ .
\eea

\begin{prop}
(\cite{dLSV}).
The following statements are equivalent:
\ben
\item
$(\omega_\cL^1,\ldots,\omega_\cL^k;\Ker\pi_*)$ is a $k$-symplectic structure in $\Tan_k^1Q$.
\item
At every point ${\bf v}_q\in\Tan^1_kQ$, there exists a chart with coordinates $(\x^i;\v_\alpha^i)$, for $1\leq i\leq n$, such that the Hessian matrix
$\ds\Big(\frac{\partial^2\cL}{\partial{\v}^i_\alpha\partial{\v}_\beta^j}({\bf v}_q)\Big)$
is regular.
\een
\end{prop}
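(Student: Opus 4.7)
My plan is to reduce the two nontrivial $k$-symplectic axioms to a pointwise linear-algebra statement about the Hessian of $\cL$. The forms $\omega^\alpha_\cL=-\d\theta^\alpha_\cL$ are automatically closed (indeed exact), and the distribution $V=\Ker\pi_*$ is automatically an integrable distribution of the correct dimension $nk$, being the vertical bundle of the submersion $\pi\colon\Tan^1_kQ\to Q$. Moreover, condition (i) of the $k$-symplectic definition is automatic for any Lagrangian: in natural coordinates $V$ is spanned by the $\partial/\partial\v^i_\alpha$, while the local expression (\ref{PC-local}) of $\omega^\alpha_\cL$ contains only $\d\x\wedge\d\x$ and $\d\x\wedge\d\v$ terms---no $\d\v\wedge\d\v$ contributions---so $\omega^\alpha_\cL|_{V\times V}\equiv 0$ regardless of $\cL$. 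Hence the proposition reduces to showing that condition (ii), namely $\bigcap_\alpha\Ker\omega^\alpha_\cL=0$ at the point ${\bf v}_q$, is equivalent to the regularity at ${\bf v}_q$ of the Hessian matrix $H_{(i,\alpha),(j,\beta)}:=\partial^2\cL/\partial\v^i_\alpha\partial\v^j_\beta$.

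The next step is to contract $\omega^\alpha_\cL$ with a generic tangent vector $X=A^i\,\partial/\partial\x^i+B^j_\gamma\,\partial/\partial\v^j_\gamma$ at ${\bf v}_q$ using (\ref{PC-local}). A direct calculation yields that the $\d\v^k_\gamma$-coefficient of $\inn_X\omega^\alpha_\cL$ is $\sum_i H_{(k,\gamma),(i,\alpha)}\,A^i$, while the $\d\x^k$-coefficient is a linear expression in $A$ (with antisymmetric cross-coefficients coming from the $\d\x\wedge\d\x$ block) minus $\sum_{j,\beta}H_{(k,\alpha),(j,\beta)}\,B^j_\beta$. The conceptual content is that the ``mixed'' $\d\x\wedge\d\v$ piece of each $\omega^\alpha_\cL$ is exactly an $n$-column slice of $H$, and the $\d\v$-side of the contraction sees only $A$ while the $\d\x$-side sees $B$ through the full matrix $H$.

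Both implications then fall out cleanly. For (2)$\Rightarrow$(1), assume $H$ is regular at ${\bf v}_q$. The vanishing of the $\d\v^k_\gamma$-coefficients for all $\alpha,k,\gamma$ forces $A=0$, because for each fixed $\alpha$ the $n$ columns of $H$ indexed by $(i,\alpha)$ are part of a basis, hence linearly independent; with $A=0$, the $\d\x^k$-coefficients collapse to $-\sum_{j,\beta}H_{(k,\alpha),(j,\beta)}B^j_\beta$, whose vanishing for all $(k,\alpha)$ is precisely $HB=0$, forcing $B=0$. Thus $X=0$ and (ii) holds. Conversely, for (1)$\Rightarrow$(2) I would argue by contrapositive: if $H$ is singular, pick $0\neq\xi=(\xi^j_\beta)$ with $H\xi=0$ and set $X:=\xi^j_\beta\,\partial/\partial\v^j_\beta\in V$; using the symmetry $H_{(i,\alpha),(j,\beta)}=H_{(j,\beta),(i,\alpha)}$, the above computation (with $A=0$) gives $\inn_X\omega^\alpha_\cL=0$ for every $\alpha$, so $0\neq X\in\bigcap_\alpha\Ker\omega^\alpha_\cL$, contradicting (ii). The main (mild) obstacle is the bookkeeping in the interior-product computation, in particular tracking the Hessian's symmetry so that the $\d\x^k$-block truly realises the map $B\mapsto HB$; once this is in place, the equivalence is simply the reformulation of nonsingularity of $H$ as triviality of the common kernel of the $k$-tuple $(\omega^1_\cL,\dots,\omega^k_\cL)$.
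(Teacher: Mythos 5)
Your proof is correct: the reduction to the pointwise linear-algebra statement is valid (closedness, integrability of $V=\Ker\pi_*$ and condition (i) are indeed automatic from the local expression \eqref{PC-local}), and the contraction computation correctly identifies $\bigcap_\alpha\Ker\omega^\alpha_\cL=0$ with the nonsingularity of the Hessian, using its symmetry in the step $A=0\Rightarrow HB=0$. The paper itself only cites this result from the reference \cite{dLSV} without proof, and your argument is essentially the standard coordinate proof given there.
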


\begin{definition}
A Lagrangian function is \textbf{regular} if
the above equivalent conditions hold.
Then $(\Tan^1_kQ;\cL)$ is a \textbf{$k$-symplectic Lagrangian system}.
Otherwise, $\cL$ is a \textbf{singular Lagrangian} and $(\Tan^1_kQ;\cL)$ is a \textbf{$k$-presymplectic Lagrangian system}.
\end{definition}

The Lagrangian field equations are established as follows:

\begin{definition}
Let $\bX=(X_1,\ldots,X_k)\in\vf^k(\Tan^1_kQ)$, the \textbf{$k$-symplectic Lagrangian equation} is
\begin{equation}\label{k-sym_LagEq}
\sum_{\alpha=1}^{k}\inn({X_\alpha})\omega^\alpha_\cL=\d\cE_\cL \ ,
\end{equation}
and $\bX$ is called a \textbf{Lagrangian $k$-vector field}.
In addition, if $\bX$ is a \textsc{sopde} then
\eqref{k-sym_LagEq} is the \textbf{$k$-symplectic Euler--Lagrange equation}
and, if $\bX$ is integrable, 
it is called an \textbf{Euler--Lagrange $k$-vector field}.
\end{definition}

A $k$-vector field $\bX=(X_1,\ldots,X_k)\in\vf^k(T^1_kQ)$ is locally given by
$$
X_\alpha=(X_\alpha)^i\frac{\partial}{\partial\x^{i}}+\sum_{\beta=1}^{k}(X_\alpha)^i_\beta\frac{\partial}{\partial\v^{i}_\beta}\ .
$$ 
Using the local expression of $\bX$, $\omega_\Lag$, and $\cE_\cL$, as
$$
\d\cE_\cL=\bigg(\sum_{\alpha=1}^k\frac{\partial^2\cL}{\partial\x^j\partial\v^i_\alpha}\v^i_\alpha-\frac{\partial\cL}{\partial\x^j}\bigg)d\x^j+\sum_{\alpha,\beta=1}^k\frac{\partial^2\cL}{\partial\v^j_\beta\partial\v^i_\alpha}\v^i_\alpha d\v^j_\beta
$$
we find that the $k$-symplectic Lagrangian equation reads
\beann
\sum_{\alpha=1}^k\Bigg(\frac{\partial^2 {\cal L}}{\partial \x^i\partial \v^j_\alpha}\v_\alpha^j-\bigg[
\frac{\partial^2 {\cal L}}{\partial \x^i \partial \v^j_\alpha} -
\frac{\partial^2 {\cal L}}{\partial \x^j \partial \v^i_\alpha}
\bigg](X_\alpha)^j+
\sum_{\beta=1}^k\frac{\partial^2 {\cal L}}{\partial \v_\alpha^i \partial \v^j_\beta}\,
(X_\alpha)^j_\beta\Bigg)
&=& \frac{\partial {\cal L}}{\partial \x^i} \ ,
\\
\sum_{\alpha=1}^k\frac{\partial^2 {\cal L}}{\partial \v^i_\beta \partial \v^j_\alpha}\,\Big[(X_\alpha)^j-\v_\alpha^j\Big]
&=& 0 \ .
\eeann
If $\cL$ is a regular Lagrangian or $\bX$ is required to be a \textsc{sopde},
then the above equations are equivalent to
\beann
\sum_{\beta=1}^k\Bigg(\frac{\partial^2\cL}{\partial\x^i\partial\v^j_\beta}\v^j_\beta + \sum_{\alpha=1}^k\frac{\partial^2\cL}{\partial\v^i_\beta\partial\v^j_\alpha}(X_\beta)^j_\alpha\Bigg) &=&\frac{\partial\cL}{\partial\x^i}\ , \\
(X_\alpha)^i &=& \v^i_\alpha\ .
\eeann

\begin{theo} \cite{dLSV}
Let $\cL\in\cF(\Tan^1_kQ)$. Then:
\begin{enumerate}[(i)]
\item If $\cL$ is regular, then there exist ${\bX}\in\vf^k(\Tan^1_kQ)$ 
solutions to to the Lagrangian equations (\ref{k-sym_LagEq}) and they are \textsc{sopde}. 
\item If ${\bX}\in\vf^k(\Tan^1_kQ)$ is an Euler--Lagrange vector field and $\tilde{\psi}$ is an integral section of ${\bX}$, then $\psi$ is a solution to the {\rm Euler--Lagrange field equations}
\beann
\sum_{\beta=1}^k\frac{\partial}{\partial t^\beta}\bigg(\frac{\partial\cL}{\partial\v^i_\beta}\circ\psi\bigg) & =&\frac{\partial\cL}{\partial\x^i}\circ\psi\ , \\
\v^i_\alpha\circ\psi & =& \frac{\partial(\x^i\circ\psi)}{\partial t^\alpha},
\eeann
\end{enumerate}
\end{theo}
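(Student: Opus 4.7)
The approach is to exploit the local decomposition of the $k$-symplectic Lagrangian equation $\sum_\alpha\inn(X_\alpha)\omega^\alpha_\cL=\d\cE_\cL$ into a \emph{vertical} block (the $d\v^j_\beta$-coefficients) and a \emph{horizontal} block (the $d\x^i$-coefficients), which is already displayed just above the statement. In part (i), the \textsc{sopde} condition will be forced by the vertical block, and the existence of a solution will follow from solvability of the horizontal block after the corresponding substitution. In part (ii), the Euler--Lagrange field equations will be recovered from the defining $k$-symplectic equation by a chain-rule computation along integral sections.

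For (i), I would begin with the vertical block $\sum_{\alpha}\frac{\partial^2{\cal L}}{\partial\v^i_\beta\partial\v^j_\alpha}\bigl[(X_\alpha)^j-\v_\alpha^j\bigr]=0$. By the preceding proposition, regularity of $\cL$ is equivalent to non-singularity of the Hessian $(\partial^2{\cal L}/\partial\v^i_\beta\partial\v^j_\alpha)$ viewed as an $nk\times nk$ matrix with rows indexed by $(i,\beta)$ and columns by $(j,\alpha)$; hence the block forces $(X_\alpha)^j=\v^j_\alpha$, which is precisely the local \textsc{sopde} condition (\ref{sopde-local}), and settles the assertion that every solution is \textsc{sopde}. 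Substituting $(X_\alpha)^j=\v^j_\alpha$ back into the horizontal block leaves a linear system of $n$ equations in the $nk^2$ coefficients $(X_\beta)^j_\alpha$. To exhibit a smooth local solution I would use the ansatz $(X_\beta)^j_\alpha=\delta_\beta^1\, Y^j_\alpha$: the coefficient matrix acting on $Y$ is the block of the Hessian obtained by fixing $\beta=1$, which has full row rank as it comprises $n$ of the $nk$ linearly independent rows of a non-singular matrix. A smooth $Y$ therefore exists locally, and a partition of unity glues these solutions into a global Lagrangian $k$-vector field $\bX$, which is \textsc{sopde} by construction.

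For (ii), since $\bX$ is an integrable \textsc{sopde} and $\tilde\psi$ is its integral section, the identities (\ref{nn}) apply; in particular $\v^i_\alpha\circ\tilde\psi=\partial\psi^i/\partial t^\alpha$, which is the second of the Euler--Lagrange field equations in the statement. For the first, I would pull back the Lagrangian equation along $\tilde\psi$ and replace $(X_\alpha)^j_\beta\circ\tilde\psi$ by $\partial^2\psi^j/(\partial t^\alpha\partial t^\beta)$ and $(X_\alpha)^i\circ\tilde\psi$ by $\partial\psi^i/\partial t^\alpha$ using (\ref{nn}); the resulting expression is exactly the chain-rule expansion of $\sum_\beta\partial/\partial t^\beta\bigl((\partial{\cal L}/\partial\v^i_\beta)\circ\tilde\psi\bigr)$, so it equals $\partial{\cal L}/\partial\x^i\circ\psi$ as required. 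The main obstacle lies in part (i): the reduced horizontal system is underdetermined, and one must check that a smooth solution can be chosen on each chart before gluing. The ansatz above handles this by reducing the question to surjectivity of a sub-block of the Hessian; an alternative symmetric ansatz $(X_\beta)^j_\alpha=(X_\alpha)^j_\beta$ might be preferable with an eye on the integrability relations noted after (\ref{nn}). Once these solvability concerns are met, the chain-rule computation in (ii) is routine.
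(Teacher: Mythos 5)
Your proof is correct, and it follows essentially the route the paper itself sets up: the theorem is quoted from \cite{dLSV} without proof, but the two-block coordinate decomposition of the Lagrangian equation displayed just before it is exactly what you use --- the regular $nk\times nk$ Hessian forces $(X_\alpha)^i=\v^i_\alpha$ from the $\d\v$-block, the remaining $\d\x$-block is an underdetermined affine system solved smoothly (your rank argument and partition-of-unity gluing are fine since the solution set is affine), and the chain rule along a holonomic integral section turns the reduced block into the Euler--Lagrange equations. Nothing further is needed.
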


Thus, if $\cL$ is regular, the existence of solutions to the Lagrangian equations (\ref{k-sym_LagEq}) is assured, although they are neither unique, nor necessarily integrable. Furthermore, they are \textsc{sopde}s and, if $\bX$ is integrable, their integral sections are holonomic and they are solutions to the Euler--Lagrange equations.
If $\cL$ is a singular Lagrangian, then the \sopde condition $(X_\alpha)^i=\v_\alpha^i$ is not obtained from the Lagrangian equations and it must be imposed as an additional condition in order to obtain holonomic solutions to the field equations (as it is required by the variational principles).

As it is usual in mechanics, we will restrict our study 
to a particular family of singular Lagrangians.
First the \emph{\textbf{Legendre transformation}} induced by $\cL$ 
is the map $\Leg\colon\Tan^1_kQ\longrightarrow (\Tan^1_k)^*Q$ given by
$$
\Leg(q;v_{q1},\ldots,v_{qk})\coloneqq\frac{d}{ds}\Big|_{s=0}\cL(q;v_{q1},\ldots,v_{q\alpha}+su_q,\ldots,v_{qk}).
$$
Observe that $\tau\circ\Leg=\pi$ and then
this is a fiber-preserving map.
In natural coordinates we have
$$
\Leg^*q^i=q^i \quad , \quad \Leg^*p_i^\alpha=\derpar{\Lag}{v^i_\alpha} \ .
$$

\begin{definition} A singular Lagrangian function $\cL\in C^\infty(\Tan^1_kQ)$ 
is said to be \textbf{almost-regular} if:
\begin{enumerate}
	\item $\Leg(\Tan^1_kQ)$ is a closed submanifold of $(\Tan^1_k)^*Q$.
	\item $\Leg$ is a submersion onto its image.
	\item The fibers $(\Leg)^{-1}({\bf p}_q)$ for all ${\bf p}_q\in\Leg(\Tan^1_kQ)$ 
	are connected submanifolds of $\Tan^1_kQ$.
\end{enumerate}
\end{definition}

This assumption assures that certain regularity conditions hold which,
in particular, guarantee the existence of a Hamiltonian counterpart for these kinds of systems.
In fact; let $P_0=\Leg(\Tan^1_kQ)$ with the natural embedding
$j_0\colon P_0\hookrightarrow(\Tan^1_k)^*Q$,
and $\Leg_0\colon\Tan^1_KQ\to P_0$ such that
$j_0\colon\Leg_0=\Leg$.
The restriction of the canonical $k$-symplectic structure $(\omega^1,\ldots,\omega^k,\d t)$ of $(\Tan^1_k)^*Q$
to $P_0$ defines a $k$-presymplectic structure 
$(\omega_0^1,\ldots,\omega_0^k;\d t)$ in $P_0$, where $\omega_0^\alpha=j_o^*\omega^\alpha$,
and $\omega_\Lag^\alpha=\Leg_0^*\omega_0^\alpha=\Leg^*\omega^\alpha$. 
Furthermore, the Lagrangian energy $\cE_\cL$ is $\Leg$-projectable, 
then there exists a function $H_0\in\Cinfty(P_0)$ such that
$\Leg_0^*H_0=\cE_\cL$, and
$(P_0,H_0)$ is a {\sl $k$-presymplectic Hamiltonian system} with equations
\beq
\sum_{\alpha=1}^{k}\inn(X^0_\alpha)\omega^\alpha_0=\d H_0 \quad , \quad
{\bf X}^0=(X^0_\alpha)\in\vf^k(P_0) \ .
\label{hameqs}
\eeq
which is $\Leg$-related with the Lagrangian system $(\Tan^1_kQ,\Lag)$
(see \cite{GMR} for the details).


\section{Constraint algorithm for $k$-presymplectic Lagrangian systems}

Next we give a generalisation to $k$-presymplectic Lagrangian systems of the constraint algorithm
presented in \cite{MR} for presymplectic Lagrangian systems.
This new algorithm generalizes what was developed in \cite{GMR},
incorporating the \sopde condition for the solutions.

\subsection{Statement of the problem and previous considerations}

Let $(\Tan^1_kQ;\cL)$ be a $k$-presymplectic Lagrangian system,
where $\Lag\in\Cinfty(\Tan^1_kQ)$ is an almost-regular Lagrangian.

The problem we want to solve consists in finding a submanifold 
$S$ of $\Tan^1_kQ$ and a $k$-vector field $\bX=(X_1,\ldots,X_k)\in\vf^k(\Tan^1_kQ)$ 
such that, on the points of $S$, 
\begin{enumerate}
	\item $\bX$ is a solution to the Lagrangian equation
	$\ds\sum_{\alpha=1}^k\inn(X_\alpha)\omega^\alpha_\cL=\d\cE_\cL$,
	\item $\bX$ is a \textsc{sopde}, and
	\item $\bX$ is tangent to $S$.
\end{enumerate}
In addition, $\bX$ should be an integrable $k$-vector field in $C$. Nevertheless, the integrability problem will not be considered in depth in this work.

\noindent{\bf Notation:} The set of $k$-vector fields on $\Tan^1_kQ$ that are solutions
to the Lagrangian equation on a subset $S\subseteq \Tan^1_kQ$ 
is denoted $\fX^k_\cL(\Tan^1_k Q)|_S$. The set of vector fields on $\Tan^1_kQ$ that are 
\textsc{sopde}s on a subset $S\subseteq\Tan^1_kQ$ is $\fX^k_\cS(\Tan^1_kQ)|_S$. 
We write 
$\fX^k_{\cL,\cS}(\Tan^1_kQ)_S\coloneqq\fX^k_{\cL}(\Tan^1_kQ)|_S\cap\fX^k_{\cS}(\Tan^1_kQ)|_S$. 
The set of vertical vector fields in $\Tan^1_kQ$ for the projection 
$\pi\colon\Tan^1_kQ\to Q$ is denoted by $\fX^V(\Tan^1_kQ)$ and,
similarly, $\fX^{kV}(\Tan^1_kQ)$ denotes the set of vertical $k$-vector fields for $\pi$, 
i.e. $\bX=(X_\alpha)$ such that $X_\alpha\in\fX^V(\Tan^1_kQ)$.

In the following, we consider the map
\begin{equation*}
\begin{split}
\bOm_\cL\ :\ \ \ \ \ \ \ \ \ \ \ \ \fX^k(\Tan^1_kQ)& \longrightarrow\ \bOm(\Tan^1_kQ)\\
\bX=(X_1,\ldots,X_k)& \longmapsto\ \bOm_\cL(\bX)\coloneqq\sum_{\alpha=1}^k\inn(X_\alpha)\omega^\alpha_\cL.
\end{split}
\end{equation*}
Then, for ${\mathfrak S}\subseteq\fX^k(\Tan^1_kQ)$ and $\bOm_\cL$,
the {\sl annihilator} of ${\mathfrak S}$ is defined as
$$
{\mathfrak S}^\perp\coloneqq\{X\in\fX(\Tan^1_kQ)\ |\ \bOm_\Lag(\bY)(X)=0,\ \mbox{\rm for every $\bY\in {\mathfrak S}$}\} \ .
$$ 

The following properties hold:

\begin{lemma}
\label{lemuno}
$\ds [\fX^k(\Tan^1_kQ)]^\perp=\bigcap_{\alpha=1}^k\Ker\omega^\alpha_\Lag$.
\end{lemma}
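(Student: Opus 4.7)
The plan is to prove the set equality by a routine double inclusion, unfolding only the definitions of $\bOm_\Lag$ and of the annihilator. The statement really only uses the antisymmetry of the $2$-forms $\omega^\alpha_\Lag$ and the freedom to plug in arbitrary $k$-vector fields.

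For the inclusion $\bigcap_{\alpha=1}^{k}\Ker\omega^\alpha_\Lag \subseteq [\fX^k(\Tan^1_kQ)]^\perp$, I take $X$ with $\inn(X)\omega^\alpha_\Lag=0$ for every $\alpha$. Then for any $\bY=(Y_1,\ldots,Y_k)\in\fX^k(\Tan^1_kQ)$,
$$
\bOm_\Lag(\bY)(X)=\sum_{\alpha=1}^{k}\omega^\alpha_\Lag(Y_\alpha,X)=-\sum_{\alpha=1}^{k}\omega^\alpha_\Lag(X,Y_\alpha)=0,
$$
by antisymmetry and the hypothesis on $X$. Hence $X\in[\fX^k(\Tan^1_kQ)]^\perp$.

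For the reverse inclusion, the idea is to test the annihilation condition against $k$-vector fields whose components are all zero except one. Given $X\in[\fX^k(\Tan^1_kQ)]^\perp$ and a fixed index $\beta\in\{1,\ldots,k\}$, for each $Z\in\fX(\Tan^1_kQ)$ I consider the $k$-vector field $\bY_{(\beta,Z)}\in\fX^k(\Tan^1_kQ)$ whose $\beta$-th slot is $Z$ and whose remaining slots are the zero vector field. Then the annihilation condition
$$
0=\bOm_\Lag(\bY_{(\beta,Z)})(X)=\omega^\beta_\Lag(Z,X)
$$
must hold for every $Z\in\fX(\Tan^1_kQ)$, which says precisely that $\inn(X)\omega^\beta_\Lag=0$. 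Since $\beta$ was arbitrary, $X\in\bigcap_{\alpha=1}^{k}\Ker\omega^\alpha_\Lag$.

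There is no real obstacle: the only thing to be checked is that the "one nonzero component" $k$-vector fields are admissible, which is immediate since the zero vector field is a section of $\pi^\alpha\colon\Tan^1_kQ\to\Tan Q$ for each $\alpha$. So the lemma essentially expresses that the $k$-symplectic pairing $\bOm_\Lag$, despite summing $k$ contractions, still sees each factor separately when probed by $k$-vector fields with a single nontrivial component.
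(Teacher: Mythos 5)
Your proof is correct and follows essentially the same route as the paper: the forward inclusion by antisymmetry of the $\omega^\alpha_\Lag$, and the reverse inclusion by testing against $k$-vector fields with a single nonzero slot (the paper uses $(Y_1,0,\ldots,0)$ and permutations thereof). The only cosmetic remark is that the admissibility of these test objects is justified by the zero vector field being a section of the tangent bundle of $\Tan^1_kQ$, not of $\pi^\alpha$, but this does not affect the argument.
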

\proof\
For every $(Y_1,\ldots,Y_k)\in\vf^k(\Tan^1_kQ)$,
if $\ds X\in \bigcap_{\alpha=1}^k \Ker\omega_\Lag^\alpha$ \ we have
$$
\Big(\sum_{\alpha=1}^k\inn(Y_\alpha)\omega^\alpha_\Lag\Big)(X) =
-\sum_{\alpha=1}^k\inn(Y_\alpha)\inn(X)\omega^\alpha_\Lag=0
\ \Longrightarrow \ X\in[\fX^k(\Tan^1_kQ)]^\perp \ .
$$
Conversely, if $X\in[\fX^k(\Tan^1_kQ)]^\perp$,
then $\ds\sum_{\alpha=1}^k\inn(Y_\alpha)\inn(X)\omega^\alpha_\Lag=0$,
for every $(Y_1,\ldots,Y_k)\in\vf^k(\Tan^1_kQ)$.
Then, taking any $(Y_1,0,\ldots,0)$ with $Y_1\neq 0$,
we conclude that $X\in\Ker\omega^1_\Lag$;
and analogously for the others.
\qed

\begin{lemma}
\label{lemdos}
$\ds [\vf^k(\Tan^1_kQ)]^\perp\cap\vf^V(\Tan^1_kQ)=\Ker\Leg_*$.
\end{lemma}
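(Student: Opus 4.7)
My plan is to combine Lemma \ref{lemuno} with an explicit local computation that matches the kernel of the fiber derivative of $\Leg$ to the common kernel of the Cartan $2$-forms when restricted to vertical vectors.

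First I would observe that $\Ker\Leg_*\subseteq\vf^V(\Tan^1_kQ)$ automatically. Indeed, from the identity $\tau\circ\Leg=\pi$ recorded just above the statement, one gets $\pi_*=\tau_*\circ\Leg_*$, so every $Z\in\Ker\Leg_*$ satisfies $\pi_* Z=0$. Hence the intersection with $\vf^V(\Tan^1_kQ)$ on the right-hand side is harmless and the claim becomes $\bigl[\fX^k(\Tan^1_kQ)\bigr]^\perp\cap\fX^V(\Tan^1_kQ)=\Ker\Leg_*$. By Lemma \ref{lemuno}, the left-hand side equals $\bigl(\bigcap_{\alpha=1}^k\Ker\omega^\alpha_\cL\bigr)\cap\fX^V(\Tan^1_kQ)$, so it suffices to show that for a $\pi$-vertical $Z$ one has $\inn(Z)\omega^\alpha_\cL=0$ for every $\alpha$ if and only if $\Leg_*Z=0$.

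For this I would work in natural coordinates. Write $Z=Z^j_\gamma\,\partial/\partial\v^j_\gamma$. Using the local expression \eqref{PC-local} of $\omega^\alpha_\cL$ and the fact that $\inn(Z)\d\x^i=0$, only the second term survives, giving
\begin{equation*}
\inn(Z)\omega^\alpha_\cL \;=\; -\sum_{\gamma=1}^{k}\frac{\partial^2\cL}{\partial\v^j_\gamma\partial\v^i_\alpha}\,Z^j_\gamma\,\d\x^i \ .
\end{equation*}
Hence $Z\in\bigcap_\alpha\Ker\omega^\alpha_\cL$ iff the Hessian contractions
\begin{equation*}
\sum_{\gamma=1}^k \frac{\partial^2\cL}{\partial\v^j_\gamma\partial\v^i_\alpha}\,Z^j_\gamma \;=\; 0
\qquad \text{for all } i,\alpha.
\end{equation*}
On the other hand, from $\Leg^*\x^i=\x^i$ and $\Leg^*\p_i^\alpha=\partial\cL/\partial\v^i_\alpha$, a vertical $Z$ pushes forward to
\begin{equation*}
\Leg_* Z \;=\; \Bigl(\sum_{\gamma=1}^k \frac{\partial^2\cL}{\partial\v^j_\gamma\partial\v^i_\alpha}\,Z^j_\gamma\Bigr)\frac{\partial}{\partial\p_i^\alpha}\,,
\end{equation*}
so the vanishing condition is exactly the same. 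This proves both inclusions simultaneously.

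I do not expect any real obstacle: once Lemma \ref{lemuno} is invoked, everything reduces to recognising the same Hessian block in two guises (the matrix of $\omega^\alpha_\cL$ pairing vertical with horizontal, and the Jacobian of $\Leg$ on the fibers). The only point that requires minor care is the sign/indexing in $\inn(Z)(\d\x^i\wedge\d\v^j_\gamma)$ and the verification that the inclusion $\Ker\Leg_*\subseteq\fX^V(\Tan^1_kQ)$ holds, both of which are handled above.
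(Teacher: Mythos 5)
Your proof is correct and follows essentially the same route as the paper's: after reducing via Lemma \ref{lemuno}, both arguments come down to recognising the same Hessian block $\Bigl(\frac{\partial^2\cL}{\partial\v^j_\gamma\partial\v^i_\alpha}\Bigr)$ in the contraction of $\omega^\alpha_\cL$ with a vertical vector and in the fiber part of the Jacobian of $\Leg$. The only (cosmetic) difference is that the paper establishes the inclusion $\Ker\Leg_*\subseteq\bigcap_{\alpha}\Ker\omega^\alpha_\cL$ intrinsically from $\omega^\alpha_\cL=\Leg_0^*\omega_0^\alpha$, whereas you obtain both inclusions at once from the single coordinate identity.
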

\begin{proof} 
Remember that, by Lemma \ref{lemuno}, $\ds [\vf^k(\Tan^1_kQ)]^\perp=\bigcap_{\alpha=1}^k\Ker\omega_\Lag^\alpha$.
First, as $\tau\circ\Leg=\pi$ (that is, $\Leg$ is fiber-preserving), then
$\Ker\Leg_*\subset\vf^V(\Tan^1_kQ)$. Furthermore, for every $Z\in\Ker\Leg_*$,
\beq
\inn(Z)\omega^\alpha_\Lag=\inn(Z)(\Leg_0^*\omega_0^\alpha)=\Leg_0^*[\inn(\Leg_{0*}Z)\omega_0^\alpha]=0 \ ;
\label{keromegas}
\eeq
since $\Leg_{0*}Z=0$; hence $Z\in\Ker\omega_\Lag^\alpha$,
for every $\alpha$; therefore $\ds Z\in\bigcap_{\alpha=1}^k\Ker\omega_\Lag^\alpha$,
and then we conclude that 
$$
Z\in\Big(\bigcap_{\alpha=1}^k\Ker\omega_\Lag^\alpha\Big)\cap\vf^V(\Tan^1_kQ)=[\vf^k(\Tan^1_kQ)]^\perp\cap\vf^V(\Tan^1_kQ) \ .
$$
Conversely, for every $\alpha$,
from the coordinate expression \eqref{PC-local} we see that a vector field 
$\dst Z=\sum_{\alpha=1}^kX^i_\alpha\frac{\partial}{\partial\v^i_\alpha}\in\vf^V(\Tan^1_kQ)$
belongs to $\Ker\omega_\Lag^\alpha$ if, and only if, it satisfies that
$\ds\sum_{\alpha=1}^kX^i_\alpha\frac{\partial^2\cL}{\partial\v^i_\alpha\v^j_\beta}=0$.
Then, taking into account that the matrix representing $\Leg_*$ is
$\ds
\left(\begin{matrix}
\left({\rm Id}\right)_{n\times n}&\left( 0 \right)_{n\times nk}\\
\left(\ds\frac{\partial^2\Lag}{\partial q^i\partial v^j_\alpha}\right)&\left(\ds\frac{\partial^2\Lag}{\partial v^j_\beta\partial v^i_\alpha}\right)
\end{matrix}
\right)
$,
a simple calculation leads to $Z\in\Ker\Leg_*$.
\end{proof}


\subsection{Compatibility conditions: First generation constraints}

We start by imposing the compatibility condition. The subset of $\Tan^1_kQ$ where the Lagrangian equation \eqref{k-sym_LagEq} has a solution is 
$$
P_1\coloneqq\big\{x\in\Tan^1_kQ\ \big|\ (\d\cE_{\Lag})_x\in{\rm Im}\,(\bOm_{\Lag})_x\subseteq \Tan^*_x(\Tan^1_kQ)\big\}
$$
and it is assumed to be a closed submanifold of $\Tan^1_kQ$.
Then, using Lemma \ref{lemuno} one can prove that \cite{GMR}:

\begin{prop}
$P_1=\big\{x\in\Tan^1_kQ\ \big|\ (\inn(Z)\d\cE_\Lag)(x)=0,\ \mbox{\rm for every  $Z\in[\fX^k(\Tan^1_kQ)]^\perp$}\big\}.$

Moreover, if \ $\bX\in\fX^k_\cL(T^1_kQ)|_{P_1}$, then \
$\fX^k_\cL(\Tan^1_kQ)|_{P_1}=\big\{\bX+\bY\ \big|\ \bY\in\Ker\bOm_\Lag|_{P_1}\big\}$.
\end{prop}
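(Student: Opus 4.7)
The plan is to prove the two parts separately. The first is essentially a pointwise linear-algebra computation combined with Lemma \ref{lemuno}, while the second is a straightforward linearity argument exploiting the $C^\infty(\Tan^1_kQ)$-linearity of $\bOm_\Lag$.

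For the first equality, I would argue pointwise. Fix $x \in \Tan^1_kQ$. By elementary linear algebra in the finite-dimensional space $\Tan_x(\Tan^1_kQ)$ and its dual, one has $(\d\cE_\Lag)_x \in {\rm Im}\,(\bOm_\Lag)_x$ if and only if $(\d\cE_\Lag)_x$ annihilates the subspace $W_x \subseteq \Tan_x(\Tan^1_kQ)$ defined as the annihilator of ${\rm Im}\,(\bOm_\Lag)_x$. The key identification is
$$
W_x = \bigl\{Z_x \in \Tan_x(\Tan^1_kQ) \ \big|\ Z_x \in [\fX^k(\Tan^1_kQ)]^\perp \text{ when viewed as a value of a global field}\bigr\},
$$
i.e.\ $W_x$ is exactly the evaluation of $[\fX^k(\Tan^1_kQ)]^\perp$ at $x$. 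One inclusion is immediate: if $Z \in [\fX^k(\Tan^1_kQ)]^\perp$ then by definition $\bOm_\Lag(\bY)(Z) = 0$ for all $\bY$, so $Z_x$ annihilates ${\rm Im}\,(\bOm_\Lag)_x$. The converse uses that any tangent vector extends to a global vector field (via bump functions), together with Lemma \ref{lemuno} to ensure the extension can be chosen inside $\bigcap_\alpha \Ker\omega_\Lag^\alpha = [\fX^k(\Tan^1_kQ)]^\perp$; since the characterization of $W_x$ is pointwise, this guarantees that annihilation on $W_x$ is equivalent to annihilation on all values of $[\fX^k(\Tan^1_kQ)]^\perp$ at $x$. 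Combining, $x \in P_1$ iff $(\inn(Z)\d\cE_\Lag)(x) = 0$ for every $Z \in [\fX^k(\Tan^1_kQ)]^\perp$.

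For the second statement, I would exploit the $\R$-linearity (indeed $C^\infty$-linearity) of the map $\bOm_\Lag$. Given the reference solution $\bX \in \fX^k_\Lag(\Tan^1_kQ)|_{P_1}$, any other element $\bX' \in \fX^k_\Lag(\Tan^1_kQ)|_{P_1}$ satisfies $\bOm_\Lag(\bX')|_{P_1} = \d\cE_\Lag|_{P_1} = \bOm_\Lag(\bX)|_{P_1}$. Setting $\bY := \bX' - \bX$, linearity of $\bOm_\Lag$ yields $\bOm_\Lag(\bY)|_{P_1} = 0$, that is, $\bY \in \Ker\bOm_\Lag|_{P_1}$. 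Conversely, for any $\bY \in \Ker\bOm_\Lag|_{P_1}$, the $k$-vector field $\bX + \bY$ satisfies $\bOm_\Lag(\bX + \bY)|_{P_1} = \d\cE_\Lag|_{P_1}$ and is therefore in $\fX^k_\Lag(\Tan^1_kQ)|_{P_1}$. This establishes the claimed equality of sets.

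The main subtlety is the extension step in the first part: one must be careful that annihilation of $(\d\cE_\Lag)_x$ against pointwise values of vectors in $[\fX^k(\Tan^1_kQ)]^\perp$ is equivalent to annihilation against $W_x$. This is automatic because both spaces coincide as subspaces of $\Tan_x(\Tan^1_kQ)$, the identification being the content of Lemma \ref{lemuno} together with a standard partition-of-unity argument to realize any element of $\bigcap_\alpha \Ker\omega_\Lag^\alpha$ at $x$ as the value of a globally defined vector field of the same kind. Everything else reduces to the dual-space identity $({\rm Im}\,A)^{\circ\circ} = {\rm Im}\,A$ applied fiberwise.
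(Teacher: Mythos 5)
Your proof is correct and follows essentially the argument the paper relies on (the paper itself only cites \cite{GMR} here): the first part is the finite-dimensional duality $(\d\cE_\Lag)_x\in{\rm Im}\,(\bOm_\Lag)_x \Leftrightarrow (\d\cE_\Lag)_x$ annihilates $\big({\rm Im}\,(\bOm_\Lag)_x\big)^\circ=\bigcap_\alpha\Ker(\omega^\alpha_\Lag)_x$ combined with Lemma \ref{lemuno}, and the second part is linearity of $\bOm_\Lag$. The only point worth making explicit is that your pointwise-to-global extension inside $\bigcap_\alpha\Ker\omega^\alpha_\Lag$ tacitly uses that this distribution has locally constant rank, a regularity hypothesis the paper also assumes implicitly throughout the algorithm.
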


\begin{definition}
The functions
$$
\zeta^{(Z)}_1\coloneqq\inn(Z)\d\cE_\Lag \ , \ \mbox{\rm for $Z\in[\fX^k(T^1_kQ)]^\perp$} \ ,
$$
are called \textbf{first generation $k$-presymplectic} or \textbf{dynamical constraints}.
They define the submanifold $P_1$.
\end{definition}

The name of these constraints refers to the fact that 
their origin has nothing to do with the second-order condition, 
but only with the compatibility of Lagrangian equations in general.
As it happens in mechanics \cite{BGPR,CLR,got79,MR},
the following property characterizes these kinds of constraints:

\begin{prop}
First generation  dynamical constraints can be expressed 
as $\Leg$-projectable functions.
\label{dynpro}
\end{prop}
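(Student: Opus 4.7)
The plan is to relate every Lagrangian first-generation constraint $\zeta_1^{(Z)}$ to a Hamiltonian one on $P_0$, using the factorization $\Leg=j_0\circ\Leg_0$ together with $\omega_\cL^\alpha=\Leg_0^*\omega_0^\alpha$ and $\cE_\cL=\Leg_0^*H_0$ noted at the end of Section~2. For any $Z\in[\fX^k(\Tan^1_kQ)]^\perp=\bigcap_{\alpha=1}^k\Ker\omega_\cL^\alpha$ (Lemma~\ref{lemuno}), the surjectivity of $\Leg_{0*}$ together with $\omega_\cL^\alpha=\Leg_0^*\omega_0^\alpha$ forces $\Leg_{0*}Z_x\in\bigcap_\alpha\Ker\omega_0^\alpha$ at $\Leg_0(x)$, and a pointwise computation gives
\[
\zeta_1^{(Z)}(x) \;=\; \inn(Z)\,\d\cE_\cL\,|_x \;=\; \inn(Z)\,\Leg_0^*\d H_0\,|_x \;=\; \d H_0|_{\Leg_0(x)}\!\big(\Leg_{0*}Z_x\big).
\]

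The subtlety is that for a fixed $Z$ the vector $\Leg_{0*}Z_x$ need not be constant on the fibres of $\Leg_0$, so a single $\zeta_1^{(Z)}$ need not be literally $\Leg$-projectable. To circumvent this I would construct projectable representatives: by almost-regularity $\Leg_0$ is a surjective submersion with connected fibres, and a local connection for $\Leg_0$ allows one to lift any local frame $\{Z_0^{(j)}\}$ of the distribution $\bigcap_\alpha\Ker\omega_0^\alpha$ on $P_0$ to $\Leg_0$-projectable vector fields $Z^{(j)}$ on $\Tan^1_kQ$. By the pullback relation, each $Z^{(j)}\in\bigcap_\alpha\Ker\omega_\cL^\alpha$ and
\[
\zeta_1^{(Z^{(j)})} \;=\; Z^{(j)}(\cE_\cL) \;=\; Z^{(j)}(\Leg_0^*H_0) \;=\; \Leg_0^*\!\big(Z_0^{(j)}(H_0)\big),
\]
which is manifestly a pull-back of a function on $P_0$, hence $\Leg$-projectable.

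To close the argument, I would show that these projectable constraints generate the whole ideal of first-generation constraints. Given an arbitrary $Z\in\bigcap_\alpha\Ker\omega_\cL^\alpha$, Lemma~\ref{lemdos} gives $\Ker\Leg_*\subseteq\bigcap_\alpha\Ker\omega_\cL^\alpha$, so locally one decomposes $Z=\sum_j f_j Z^{(j)}+W$ with $f_j\in\Cinfty(\Tan^1_kQ)$ and $W\in\Ker\Leg_*$. Since $\cE_\cL$ is $\Leg$-projectable we have $W(\cE_\cL)=0$, hence $\zeta_1^{(Z)}=\sum_j f_j\,\zeta_1^{(Z^{(j)})}$, so $P_1$ is cut out by the $\Leg$-projectable functions $\zeta_1^{(Z^{(j)})}$ and the statement follows. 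The hard part of the plan is producing the $\Leg_0$-projectable lifts $Z^{(j)}$: this is exactly where the almost-regularity hypothesis is used in a crucial way, since it is what guarantees that $\Leg_0$ is a submersion with connected fibres and therefore admits the local sections/connections needed to pass from pointwise data to genuinely globally defined projectable vector fields.
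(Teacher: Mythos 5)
Your proof is correct and follows essentially the same route as the paper: both arguments exploit that $\Leg_0$ is a submersion to lift a local frame of $\bigcap_\alpha\Ker\omega_0^\alpha$ to $\Leg$-projectable generators of $\bigcap_\alpha\Ker\omega_\cL^\alpha$, and then use the projectability of $\cE_\cL$ (your identity $\zeta_1^{(Z^{(j)})}=\Leg_0^*\bigl(Z_0^{(j)}(H_0)\bigr)$ is just a repackaging of the paper's computation $\Lie(V)\inn(Z)\d\cE_\cL=0$ for $V\in\Ker\Leg_*$). The only addition is that you spell out the decomposition $Z=\sum_j f_jZ^{(j)}+W$ showing the projectable constraints cut out all of $P_1$, a step the paper leaves implicit in the phrase ``taking a local set of generators''.
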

\begin{proof}
First notice that, as $\Leg$ is a submersion (and hence $\Leg_0$ is too),
for every
$Z_0\in\cap_\alpha\Ker\omega_0^\alpha\subset\vf((\Tan^1_k)^*Q)$ 
there exist $Z\in\vf(\Tan^1_kQ)$ such that $\Leg_{0*}Z=Z_0$.
In addition, from \eqref{keromegas} we have that
$\inn(Z)\omega_\Lag^\alpha=\Leg_0^*[\inn(Z_0)\omega_0^\alpha]=0$,
for every $\alpha$; therefore $Z\in\cap_\alpha\Ker\omega_\Lag^\alpha$
and, furthermore, if $Z,Z'\in\cap_\alpha\Ker\omega_\Lag^\alpha$
are such that $\Leg_{0*}Z=\Leg_{0*}Z'=Z_0$, then $Z-Z'\in\Ker\Leg_*$.
As $Z\in\cap_\alpha\Ker\omega_\Lag^\alpha$ is $\Leg$-projectable
then $[V,Z]\in\Ker\Leg_*$, for every $V\in\Ker\Leg_*$. 

The necessary and sufficient condition for a function $\zeta^{(Z)}_1\in\Cinfty(\Tan^1_kQ)$
to be $\Leg$-projectable is that $V(\zeta^{(Z)}_1)=0$, 
for every $V\in\Ker\Leg_*$. Therefore, taking a local set of generators
of $\ds\bigcap_{\alpha=1}^k\Ker\omega_\Lag^\alpha$ made of these 
$\Leg$-projectable vector fields, for the corresponding base of dynamical constraints $\zeta^{(Z)}_1$ we obtain
$$
\Lie(V)(\zeta^{(Z)}_1)=\Lie(V)\inn(Z)\d\cE_\Lag=\Lie(V)\Lie(Z)\cE_\Lag=
\Lie([V,Z])\cE_\Lag-\Lie(Z)\Lie(V)\cE_\Lag=0 \ ,
$$
since $\Lie(V)\cE_\Lag=0$, because $\cE_\Lag$ is $\Leg$-projectable too.
\end{proof}

So, we have found a submanifold $P_1\hookrightarrow\Tan^1_kQ$
here we can find solutions to the Lagrangian equation \eqref{k-sym_LagEq},
but they are not \textsc{sopde}s necessarily.
Our aim now is to find the largest subset of $P_1$ such that some of the solutions in $\fX^k_\cL(\Tan^1_kQ)|_{P_1}$ can be chosen to be \textsc{sopde}s. 
Then, if $\bX\in\fX^k_\cL(T^1_kQ)|_{P_1}$ is a solution on $P_1$, we define
\beq
S_1\coloneqq\big\{x\in P_1\ \big|\ \exists\bY\in\Ker\bOm_\Lag,\ \sum_{\alpha=1}^kJ^\alpha(X_\alpha+Y_\alpha)_x=\Delta_x\big\}.
\label{modsopde}
\eeq
which is assumed to be a closed submanifold of $P_1$.
It is clear by its definition that $S_1$ is the maximal subset of $P_1$ 
where we can find solutions to the Lagrangian equation \eqref{k-sym_LagEq}
that are also \textsc{sopde}s.

Now, let $\fM\coloneqq[\fX^{kV}(\Tan^1_kQ)]^\perp$.

\begin{theo}\label{kS_1} 
Let $\bX\in\fX^k_\cL(\Tan^1_kQ)|_{P_1}$ be a Lagrangian $k$-vector field and 
$\bY\in\fX^k(\Tan^1_kQ)$ be any $k$-vector field 
such that $\bX+\bY$ is a \textsc{sopde}. Then
$$
S_1=\big\{x\in P_1\ \big|\ \big(\inn(Z)\bOm_\Lag(\bY)\big)(x)=0,\ 
\mbox{\rm for every $Z\in\fM$}\big\} \ .
$$
\end{theo}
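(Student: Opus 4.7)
The plan is to unwind the defining condition of $S_1$ at a point $x\in P_1$ into a pointwise linear-algebraic statement about the covector $\bOm_\Lag(\bY)(x)\in\Tan^*_x(\Tan^1_kQ)$, and then to identify the relevant image subspace with the annihilator of $\fM$ at $x$ by standard double-annihilator duality in the finite-dimensional fibre.

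The first step is to translate the \sopde freedom into verticality. If some $\bY'\in\Ker\bOm_\Lag$ satisfies $\sum_\alpha J^\alpha(X_\alpha+Y'_\alpha)_x=\Delta_x$, then subtracting the \sopde identity already satisfied by $\bX+\bY$ yields $\sum_\alpha J^\alpha(Y'_\alpha-Y_\alpha)_x=0$, and the coordinate form $J^\alpha=\frac{\partial}{\partial\v^i_\alpha}\otimes d\x^i$ forces each $(Y'_\alpha-Y_\alpha)(x)$ to be $\pi$-vertical. Hence $x\in S_1$ is equivalent to the existence of a vertical $k$-tuple $\bV_0=(V_{0,1},\dots,V_{0,k})$ at $x$ with $\sum_\alpha\inn(V_{0,\alpha})\omega^\alpha_\Lag|_x=-\bOm_\Lag(\bY)(x)$; that is, $-\bOm_\Lag(\bY)(x)$ lies in the image $W\subset\Tan^*_x(\Tan^1_kQ)$ of the linear map sending a vertical $k$-tuple $(v_1,\dots,v_k)$ to $\sum_\alpha\inn(v_\alpha)\omega^\alpha_\Lag|_x$.

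For the inclusion $\subseteq$ I would take such a $\bV_0$ and extend it to a global $\tilde\bV\in\fX^{kV}(\Tan^1_kQ)$ with $\tilde\bV(x)=\bV_0$. For every $Z\in\fM=[\fX^{kV}(\Tan^1_kQ)]^\perp$ the global identity $\bOm_\Lag(\tilde\bV)(Z)=0$, combined with the $\Cinfty$-linearity of each $\omega^\alpha_\Lag$ in its slots (so only $\tilde\bV(x)=\bV_0$ and $Z(x)$ matter), gives $(\inn(Z)\bOm_\Lag(\bY))(x)=-\sum_\alpha\omega^\alpha_\Lag(V_{0,\alpha},Z(x))=0$. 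Conversely, the core of $\supseteq$ is double-annihilator duality in $\Tan_x(\Tan^1_kQ)$: the image $W$ equals $(W^\circ)^\circ$, where
$$W^\circ=\{u\in\Tan_x(\Tan^1_kQ):\textstyle\sum_\alpha\omega^\alpha_\Lag(v_\alpha,u)=0\ \text{for every vertical}\ (v_1,\dots,v_k)\}.$$
Since any vertical $k$-tuple at $x$ is the value there of some $\bV\in\fX^{kV}(\Tan^1_kQ)$, the subspace $W^\circ$ coincides with the fibre of $\fM$ at $x$, so the hypothesis $\inn(Z)\bOm_\Lag(\bY)(x)=0$ for every $Z\in\fM$ reads exactly $\bOm_\Lag(\bY)(x)\in(W^\circ)^\circ=W$. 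The first step then produces a vertical correction $\bV_0$, and extending it globally together with Lemma 2 and the almost-regularity of $\Lag$ produces the required $\bY'\in\Ker\bOm_\Lag$ that is \sopde at $x$.

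The main obstacle I expect is exactly this last globalisation: the definition of $S_1$ asks for a global $\bY'\in\Ker\bOm_\Lag$, while the duality argument only produces a pointwise correction at $x$. I would need to confirm both that the fibre of $\fM$ at $x$ is realised by values at $x$ of \emph{global} elements of $\fM$ (so that the stated hypothesis really is fibrewise), and that the pointwise vertical correction $\bV_0$ can be promoted to a global member of $\Ker\bOm_\Lag$ still satisfying the \sopde condition at $x$; both hinge on the smooth-subbundle structure ensured by the almost-regularity assumption and the description of $\Ker\bOm_\Lag$ furnished by Lemma 2.
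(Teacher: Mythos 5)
Your proof is correct and follows essentially the same route as the paper: the paper's Lemma~4 (that $\bY\in\fM^\perp$ admits a vertical correction into $\Ker\bOm_\Lag$) is exactly the double-annihilator identity $W=(W^\circ)^\circ$ you invoke, applied fibrewise, and the two inclusions are handled in the same way. The pointwise-versus-global issue you flag at the end is genuine, but the paper's own proof has the same looseness (it too only produces a correction $\bV$ with $(\bY+\bV)_x\in(\Ker\bOm_\Lag)_x$ at the single point $x$), so you are not missing any ingredient that the authors actually supply.
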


In order to prove this theorem, we need the following lemma:

\begin{lemma}\label{k-exten} If $\bY\in\fM^\perp$, then there exists $\bV\in\fX^{kV}(\Tan^1_kQ)$ such that $\bY-\bV\in\Ker\bOm_\Lag$.
\end{lemma}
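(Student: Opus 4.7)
The aim is, given $\bY\in\fM^\perp$, to produce a vertical $k$-vector field $\bV\in\fX^{kV}(\Tan^1_kQ)$ satisfying $\bOm_\Lag(\bV)=\bOm_\Lag(\bY)$, so that $\bY-\bV\in\Ker\bOm_\Lag$. My strategy is to combine a pointwise double-annihilator argument with the constant-rank features that follow from the almost-regularity of $\Lag$.

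First I would check that $\fX^V(\Tan^1_kQ)\subseteq\fM$. From the local expression \eqref{PC-local}, for any $\bW\in\fX^{kV}(\Tan^1_kQ)$ the one-form $\bOm_\Lag(\bW)$ is horizontal (it has only $d\x^i$ components), since vertical contractions annihilate the first summand of $\omega_\Lag^\alpha$ and produce only $d\x$-terms from the second summand. Hence $\bOm_\Lag(\bW)$ vanishes on every vertical vector field, giving the inclusion. Applying this to the hypothesis $\bY\in\fM^\perp$ forces $\bOm_\Lag(\bY)$ itself to be a horizontal one-form; in coordinates this reads
\[
\sum_{\alpha}Y^i_\alpha\,\frac{\partial^2\Lag}{\partial\v^j_\beta\partial\v^i_\alpha}=0 \quad\text{for all }j,\beta,
\]
where $Y^i_\alpha$ is the $\partial/\partial\x^i$-coefficient of $Y_\alpha$. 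Testing $\bOm_\Lag(\bY)$ next on the remaining (purely horizontal) elements of $\fM$---those $X=A^j\partial/\partial\x^j$ whose components lie in the kernel of the generalised Hessian---forces the $d\x^j$-coefficients of $\bOm_\Lag(\bY)$ to lie pointwise in the row-span of the Hessian block $\bigl(\partial^2\Lag/\partial\v^i_\beta\partial\v^j_\alpha\bigr)$. A direct computation identifies this row-span with the pointwise image of $\fX^{kV}(\Tan^1_kQ)$ under $\bOm_\Lag$, which shows that at each point there exists a vertical $\bV$ with $\bOm_\Lag(\bV)=\bOm_\Lag(\bY)$.

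The remaining and most delicate step is to upgrade this pointwise solvability to a smooth global $\bV$. Here I would invoke almost-regularity: since $\Leg$ is a submersion onto its image, it has locally constant rank, and so does the Hessian matrix $\bigl(\partial^2\Lag/\partial\v^i_\alpha\partial\v^j_\beta\bigr)$. Consequently, $\bOm_\Lag\colon\fX^{kV}(\Tan^1_kQ)\to\df^1(\Tan^1_kQ)$ is a bundle morphism with locally trivial image, admitting smooth local right-inverses; applying such an inverse to $\bOm_\Lag(\bY)$ on each chart and gluing by a partition of unity provides the required global vertical $k$-vector field $\bV$, from which $\bOm_\Lag(\bY-\bV)=0$, i.e.\ $\bY-\bV\in\Ker\bOm_\Lag$, follows immediately.
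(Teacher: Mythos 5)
Your proof is correct and follows essentially the same route as the paper's: both reduce the claim to the linear-algebra fact that $\bOm_\Lag(\bY)$ lies in the image of the vertical $k$-vector fields under $\bOm_\Lag$ precisely when it annihilates $\fM=[\fX^{kV}(\Tan^1_kQ)]^\perp$, which is exactly the hypothesis $\bY\in\fM^\perp$. The paper states this solvability criterion abstractly in one line, whereas you verify it explicitly in coordinates and additionally address the global smoothness of $\bV$ (via constant rank of the Hessian and a partition of unity) --- a point the paper leaves implicit and in fact only uses pointwise in the proof of Theorem 1.
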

\begin{proof} 
The equation $\bOm_\Lag(\bY)=\bOm_\Lag(\bV)$ can be solved for $\bV\in\fX^{kV}(\Tan^1_kQ)$
if, and only if, for every $U\in[\fX^{kV}(\Tan^1_kQ)]^\perp=\fM$, we have that
$\inn(U)\bOm_\Lag(\bY)=0$; and this holds since $\bY\in\fM^\perp=\big[[\fX^{kV}(\Tan^1_kQ)]^\perp\big]^\perp=\fX^{kV}(\Tan^1_kQ)$.
\end{proof}

\noindent\textit{Proof of Theorem \ref{kS_1}.} 
Let $\mathcal{S}\coloneqq\big\{x\in P_1\ \big|\ \big(\inn(Z)\bOm_\Lag(\bY)\big)(x)=0,\ \mbox{\rm for every  $Z\in\fM$}\big\}$. 
It is clear that this definition does not depend on the choice of solution $\bX$ 
since any two solutions differ by an element of $\Ker\bOm_\Lag$. 
One can indeed prove that it is also independent of the choice of $\bY$:
in fact, let $\tilde\bY$ be another $k$-vector field such that $\bX+\tilde\bY$ is a \textsc{sopde}; 
then $\bY-\tilde\bY$ is a $\pi$-vertical $k$-vector field 
since it is the difference of two \textsc{sopde}s,
and then, for every $Z\in\fM$,
$$
\inn(Z)\bOm_\Lag(\bY-\tilde\bY)=0 \ .
$$

Now, if $x\in S_1$, then there exists $\bY\in\Ker\bOm_\Lag$ 
such that $\bX+\bY$ is a \textsc{sopde} and is a solution 
to the Lagrangian equation \eqref{k-sym_LagEq} at $x$. 
Hence $\bOm_\Lag(\bY)_x=0$ and, in particular, $(\inn(Z)\bOm_\Lag(\bY))(x)=0$,
for every $Z\in\fM$. So, $x\in\mathcal{S}$ and $S_1\subseteq\mathcal{S}$.

Conversely, if $x\in \mathcal{S}$, then $(\inn(Z)\bOm_\Lag(\bY))(x)=0$,
for every $Z\in\fM$. Hence, $\bY_x\in\fM^\perp_x$ and,
by Lemma \ref{k-exten}, there exists a $\pi$-vertical $k$-vector 
$\mathbf{v}\in (\Tan^1_k)_x(\Tan^1_kQ)$ such that $\bY_x+\mathbf{v}\in(\Ker\bOm_\Lag)_x$. 
Taking any $\bV\in\fX^k(\Tan^1_kQ)$ such that $\bV_x=\mathbf{v}$ we get that 
$\bX+\bY+\bV$ solves the Lagrangian equation \eqref{k-sym_LagEq} 
and is a \textsc{sopde} at $x$. So $x\in S_1$ and $\mathcal{S}\subseteq S_1$.
\qed

\begin{prop} 
\label{solsopd}
If $\bD\in\fX^k_{\cL,\cS}(\Tan^1_kQ)|_{S_1}$, then
$$
\fX^k_{\cL,\cS}(\Tan^1_kQ)|_{S_1}=\big\{\bD+\bV\in\fX^k(\Tan^1_kQ)\ \big|\ \bV\in\Ker^V\bOm_\Lag|_{S_1}\big\} \ ,
$$
where $\Ker^V\bOm_\Lag\equiv\Ker\bOm_\Lag\cap\fX^{kV}(T^1_kQ)$.
\end{prop}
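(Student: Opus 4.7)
The plan is to prove both inclusions separately, using the linearity of $\bOm_\Lag$ and the simple observation that the vertical endomorphisms $J^\alpha$ annihilate $\pi$-vertical vectors.

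For the inclusion $\supseteq$, take any $\bV \in \Ker^V\bOm_\Lag|_{S_1}$. Then $\bOm_\Lag(\bD+\bV) = \bOm_\Lag(\bD) + \bOm_\Lag(\bV) = \d\cE_\Lag$ on $S_1$, so $\bD+\bV$ still satisfies the Lagrangian equation (\ref{k-sym_LagEq}). For the \sopde condition, I would use the local expression $J^\alpha = \frac{\partial}{\partial\v^i_\alpha}\otimes\d\x^i$ to observe that $J^\alpha(V_\alpha) = 0$ whenever $V_\alpha\in\fX^V(\Tan^1_kQ)$, because every $\d\x^i$ vanishes on vertical vectors. Hence $\sum_\alpha J^\alpha(D_\alpha+V_\alpha) = \sum_\alpha J^\alpha(D_\alpha) = \Delta$, so $\bD+\bV$ is a \sopde. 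Since the assumption that $\bD\in\fX^k_{\cL,\cS}(\Tan^1_kQ)|_{S_1}$ is used here only pointwise on $S_1$, this gives $\bD+\bV\in\fX^k_{\cL,\cS}(\Tan^1_kQ)|_{S_1}$.

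For the reverse inclusion $\subseteq$, pick any $\bD'\in\fX^k_{\cL,\cS}(\Tan^1_kQ)|_{S_1}$ and set $\bV\coloneqq \bD'-\bD$. Linearity of $\bOm_\Lag$ together with the fact that both $\bD$ and $\bD'$ satisfy (\ref{k-sym_LagEq}) on $S_1$ gives $\bOm_\Lag(\bV)|_{S_1}=0$, i.e.\ $\bV\in\Ker\bOm_\Lag|_{S_1}$. Then the key step is to show that $\bV$ is $\pi$-vertical. Subtracting the \sopde identities $\sum_\alpha J^\alpha(D'_\alpha) = \Delta$ and $\sum_\alpha J^\alpha(D_\alpha) = \Delta$ yields $\sum_\alpha J^\alpha(V_\alpha) = 0$. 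Writing $V_\alpha = (V_\alpha)^i\partial/\partial\x^i + (V_\alpha)^i_\beta\partial/\partial\v^i_\beta$, the local form of $J^\alpha$ gives $\sum_\alpha (V_\alpha)^i \partial/\partial\v^i_\alpha = 0$, which forces $(V_\alpha)^i = 0$ for every $i,\alpha$. Therefore each $V_\alpha$ is $\pi$-vertical and $\bV\in\Ker^V\bOm_\Lag|_{S_1}$, proving $\bD' = \bD+\bV$ lies in the claimed set.

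No serious obstacle is expected: both directions reduce to the elementary algebraic fact that $J^\alpha$ kills $\pi$-vertical vectors, together with the linearity of the map $\bOm_\Lag$. The only point that deserves care is making sure the \sopde identity $\sum_\alpha J^\alpha(X_\alpha)=\Delta$ is genuinely linear in $\bX$ so that subtraction is legitimate, which is immediate since $\Delta$ is fixed and each $J^\alpha$ is a tensor field.
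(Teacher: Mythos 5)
Your proof is correct and follows essentially the same route as the paper: one inclusion by linearity of $\bOm_\Lag$ together with the fact that $J^\alpha$ annihilates $\pi$-vertical vectors, and the other by observing that the difference of two \textsc{sopde} solutions lies in $\Ker\bOm_\Lag$ and is vertical because $\sum_\alpha J^\alpha(V_\alpha)=0$. The only difference is that you spell out in local coordinates what the paper leaves as a one-line remark.
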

\begin{proof} 
It is clear that adding an element of $\Ker^V\bOm_\Lag$ 
to a \sopde solution to the Lagrangian equation \eqref{k-sym_LagEq}
we obtain another \sopde solution. Conversely, if $\bD,\tilde\bD\in\fX^k_{\cL,\cS}(\Tan^1_kQ)|_{S_1}$ we have
$$
\bOm_\Lag(\bD-\tilde\bD)=0 \quad , \quad J^\alpha(\Gamma_\alpha-\tilde \Gamma_\alpha)=0,
$$
so $\bX-\tilde\bX$ is vertical and belongs to $\Ker\bOm_\Lag$.
\end{proof}

We have the following equivalent characterization of the submanifold $S_1$:

\begin{prop}\label{uni_k} 
For every \textsc{sopde} $\bD$ we have
$$
S_1=\big\{x\in\Tan^1_kQ\ \big|\ \inn(Z)\big(\bOm_\Lag(\bD)-\d\cE_\Lag\big)(x)=0,\ 
\mbox{\rm for every $Z\in\fM$} \big\}.
$$
\end{prop}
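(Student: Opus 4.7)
The plan is to reduce this characterization directly to Theorem~\ref{kS_1}, which already expresses $S_1$ on $P_1$ in terms of a \textsc{sopde}-lift of a Lagrangian solution. I would fix a Lagrangian solution $\bX\in\fX^k_\cL(\Tan^1_kQ)|_{P_1}$, and given the \textsc{sopde} $\bD$ appearing in the statement, set $\bY\coloneqq\bD-\bX$, so that $\bX+\bY=\bD$ is trivially a \textsc{sopde}. Theorem~\ref{kS_1} then gives
$$S_1=\big\{x\in P_1\ \big|\ \big(\inn(Z)\bOm_\Lag(\bD-\bX)\big)(x)=0,\ \forall\, Z\in\fM\big\}.$$
Since $\bX$ solves the Lagrangian equation at every point of $P_1$, we have $\bOm_\Lag(\bX)=\d\cE_\Lag$ on $P_1$, hence $\bOm_\Lag(\bD-\bX)=\bOm_\Lag(\bD)-\d\cE_\Lag$ on $P_1$, and the displayed set coincides with $\{x\in P_1\mid\inn(Z)(\bOm_\Lag(\bD)-\d\cE_\Lag)(x)=0,\ \forall Z\in\fM\}$.

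The remaining task is to replace the ambient ``$x\in P_1$'' by ``$x\in\Tan^1_kQ$'' in this characterization, and I regard this as the main obstacle. For it, I would exploit the obvious inclusion $\fX^{kV}(\Tan^1_kQ)\subseteq\fX^k(\Tan^1_kQ)$, which by duality yields $[\fX^k(\Tan^1_kQ)]^\perp\subseteq\fM$. Using Lemma~\ref{lemuno}, any $Z\in[\fX^k(\Tan^1_kQ)]^\perp$ lies in $\bigcap_\alpha\Ker\omega_\Lag^\alpha$, so $\inn(Z)\bOm_\Lag(\bD)\equiv 0$ identically on $\Tan^1_kQ$. Restricted to such $Z$, the right-hand-side condition of the proposition collapses to $\inn(Z)\d\cE_\Lag(x)=0$, which is precisely the first-generation dynamical constraint characterizing $P_1$. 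Hence the right-hand side is automatically contained in $P_1$, and on $P_1$ it coincides with $S_1$ by the previous paragraph, giving the claimed equality.

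Finally, I would note that the characterization is manifestly independent of the chosen \textsc{sopde} $\bD$: any other \textsc{sopde} $\bD'$ differs from $\bD$ by a $\pi$-vertical $k$-vector field $\bW$, and for every $Z\in\fM=[\fX^{kV}(\Tan^1_kQ)]^\perp$ one has $\inn(Z)\bOm_\Lag(\bW)=0$ directly from the definition of the annihilator. The conceptual content of the argument is the recognition that $\fM$ is simultaneously large enough to detect the dynamical obstructions defining $P_1$ (through its subspace $[\fX^k]^\perp$) and appropriately adapted so that pairing with $\fM$ kills the ambiguity coming from $\pi$-vertical \textsc{sopde} differences; once this is understood, no further computation is needed.
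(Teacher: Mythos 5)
Your proof is correct and follows essentially the same route as the paper's: both rest on the decomposition $\bD=\bX+\bY$ together with Theorem~\ref{kS_1}, and on the inclusion $[\fX^k(\Tan^1_kQ)]^\perp\subseteq\fM$ to recover the dynamical constraints defining $P_1$ from the same set of conditions. If anything, you are slightly more explicit than the paper about why the right-hand side, taken over all of $\Tan^1_kQ$, is automatically contained in $P_1$, and about the independence of the choice of $\bD$; both points are welcome but not a different argument.
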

\begin{proof} 
First, observe that 
$[\fX^k(\Tan^1_kQ)]^\perp\subseteq[\fX^{kV}(\Tan^1_kQ)]^\perp=\fM$.
Then, for $Z\in[\fX^k(\Tan^1_kQ)]^\perp\subseteq\fM$, we have that
$$
0=\inn(Z)\big(\bOm_\Lag(\bD)-\d\cE_\Lag\big)(x)=-\inn(Z)\d\cE_\Lag(x)\ ;
$$
that is, we recover the dynamical constraints. 

On the other hand, if $Z\in\fM\smallsetminus[\fX^k(\Tan^1_kQ)]^\perp$, writing $\bD=\bX+\bY$, with $\bX\in\fX^k_\cL(T^1_kQ)$, we have
$$
0=\inn(Z)\big(\bOm_\Lag(\bD)-\d\cE_\Lag\big)(x)=
\big(\inn(Z)\bOm_\Lag(\bY)\big)(x)+\inn(Z)\big(\bOm_\Lag(\bX)-\d\cE_\Lag\big)(x)=
\big(\inn(Z)\bOm_\Lag(\bY)\big)(x) \ .
$$
In the last equality we have used that the second term vanishes in the points where the dynamical constraints vanish, i.e. $P_1$; 
and we are precisely looking for the set where all the constraints vanish.
\end{proof}

\begin{definition}
For every $\bX\in\fX^k_\cL(T^1_kQ)|_{P_1}$, any \textsc{sopde} $\bD$, and $Z\in\fM\smallsetminus[\fX^k(T^1_kQ)]^\perp$, the functions 
\beq
\eta^{(Z)}_1\coloneqq\inn(Z)\bOm_\Lag(\bD-\bX)
\label{nondyncons}
\eeq
are called \textbf{first generation \sopde} or \textbf{non-dynamical constraints}.
They define $S_1\hookrightarrow P_1$.
\end{definition}

This name is justified because these constraints appear as a consequence of demanding the \sopde condition.
They are characterized by the following condition:

\begin{prop}
\label{nodynpro}
First generation \sopde constraints are not $\Leg$-projectable.
\end{prop}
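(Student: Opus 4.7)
The plan is to show that there exists $V\in\Ker\Leg_*$ with $V(\eta^{(Z)}_1)\neq 0$, since by the criterion used in the proof of Proposition \ref{dynpro}, $\Leg$-projectability of $\eta^{(Z)}_1$ is equivalent to $V(\eta^{(Z)}_1)=0$ for every $V\in\Ker\Leg_*$.

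I would start by rewriting, via Proposition \ref{uni_k}, $\eta^{(Z)}_1=\inn(Z)(\bOm_\Lag(\bD)-\d\cE_\Lag)$ for a fixed \sopde $\bD$. A preliminary reduction handles the easy case: a purely vertical $Z\in\fM$ contributes, through the local expression of $\omega_\Lag^\alpha$, a quantity proportional to $\sum_\alpha(\partial^2\cL/\partial v^i_\beta\partial v^j_\alpha)[(X_\alpha)^j-v^j_\alpha]$, which vanishes identically on $P_1$ by the vertical part of the Lagrangian equation. So only $Z$ with a non-zero horizontal component produces genuine non-dynamical constraints, and, using the surjectivity of $\Leg_*$, I would work with a representative of the form $Z=Z^i(q)\partial/\partial\x^i$, depending only on the base coordinates — in particular $\Leg$-projectable, so that $[V,Z]\in\Ker\Leg_*$ for every $V\in\Ker\Leg_*$.

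The main computation then proceeds by Cartan calculus. Since $V\in\Ker\Leg_*\subset\cap_\alpha\Ker\omega_\Lag^\alpha$ by Lemma \ref{lemdos} and the $\omega_\Lag^\alpha$ are closed, $\Lie(V)\omega_\Lag^\alpha=0$. Combined with $\Lie(V)\cE_\Lag=0$ (as $\cE_\Lag$ is $\Leg$-projectable) and $[V,Z]\in\Ker\Leg_*\subset\cap_\alpha\Ker\omega_\Lag^\alpha$, the product rule for the Lie derivative of each $\omega_\Lag^\alpha(Z,D_\alpha)$ collapses all the $[V,Z]$-contributions, and one obtains on $P_1$
\[
V(\eta^{(Z)}_1)=-\sum_\alpha\omega_\Lag^\alpha(Z,[V,D_\alpha]).
\]
The crucial new feature — absent from the analogous computation in Proposition \ref{dynpro} — is that $[V,D_\alpha]$ carries a horizontal component $V(v^i_\alpha)\partial/\partial\x^i=V^i_\alpha\partial/\partial\x^i$, which reflects precisely the non-$\Leg$-projectability of the \sopde $\bD$. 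The $\fM$-condition on $Z$ then kills the term built from the vertical part of $[V,D_\alpha]$, and one is left with a local expression involving only the antisymmetrised mixed Hessian block $\partial^2\cL/\partial\x\partial v$.

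The main obstacle is certifying that this residual expression is non-zero. Both the $\Ker\Leg_*$-condition on $V$ and the $\fM$-condition on $Z$ constrain only the purely vertical Hessian $\partial^2\cL/\partial v^i_\alpha\partial v^j_\beta$, which is logically independent from the mixed block $\partial^2\cL/\partial\x^j\partial v^i_\alpha$ appearing in the final formula. Hence for any almost-regular Lagrangian whose mixed Hessian is non-degenerate along the kernel directions — the generic situation, and in particular that of the Einstein--Palatini model treated in Section 4 — one can choose $V\in\Ker\Leg_*$ and $Z\in\fM\setminus[\fX^k(\Tan^1_kQ)]^\perp$ so that $V(\eta^{(Z)}_1)\neq 0$, thereby establishing that $\eta^{(Z)}_1$ is not $\Leg$-projectable.
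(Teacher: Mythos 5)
Your argument follows essentially the same route as the paper's proof: both compute $\Lie(V)\eta^{(Z)}_1$ for $V\in\Ker\Leg_*$, discard the $\Leg$-projectable contributions (the paper via the splitting $\eta^{(Z)}_1=\inn(Z)\bOm_\Lag(\bD)-\inn(Z)\bOm_\Lag(\bX)$ with $\bX$ chosen projectable, you via Proposition \ref{uni_k} together with $\Lie(V)\cE_\Lag=0$ and $[V,Z]\in\Ker\Leg_*$), and land on the same residual term $-\sum_\alpha\omega^\alpha_\Lag(Z,[V,\Gamma_\alpha])$. Your handling of that residue is in fact more explicit than the paper's: by splitting $[V,\Gamma_\alpha]$ into its horizontal part $V^i_\alpha\,\partial/\partial\x^i$ (which exists precisely because $\bD$ is a \sopde) and a vertical part annihilated by the condition $Z\in\fM$, you localise the obstruction in the antisymmetrised mixed Hessian block, which makes the mechanism of non-projectability transparent. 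The only place you diverge is the final non-vanishing claim: you state it conditionally on a non-degeneracy of that mixed block along the kernel directions, whereas the paper asserts $\Lie(V)\inn(Z)\bOm_\Lag(\bD)\neq 0$ outright from $Z\notin\bigcap_\alpha\Ker\omega^\alpha_\Lag$ and $[V,\Gamma_\alpha]\notin\Ker\omega^\alpha_\Lag$ --- an inference that is no more rigorous than your hedge, since a $2$-form may vanish on a particular pair of non-kernel arguments and the sum over $\alpha$ could in principle cancel. So your proof is correct to the same degree as the paper's, and the genericity caveat you make explicit (and verify for the affine/Einstein--Palatini case) is implicitly present there as well.
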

\begin{proof}
As $\Leg$ is a submersion, following the same reasoning than
at the begining of the proof of Proposition \ref{dynpro},
first we prove that, from every ${\bf X}_0\in\vf^k(P_0)$
solution to the Hamiltonian equations \eqref{hameqs},
we construct a $\Leg$-projectable solution ${\bf X}\in\vf^k(\Tan^1_kQ)$
to the Lagrangian equations \eqref{k-sym_LagEq}.
In the same way, we can obtain a set of vector fields
$Z\in\fM\smallsetminus[\fX^k(T^1_kQ)]^\perp$ which generate (locally) 
$\fM\smallsetminus[\fX^k(T^1_kQ)]^\perp$ and are $\Leg$-projectable too.
Therefore, taking this solution $\bX$,
the part $\inn(Z)\bOm_\Lag(\bX)$ in the expression \eqref{nondyncons} gives a $\Leg$-projectable function.
However, for $\inn(Z)\bOm_\Lag(\bD)$ we have that,
for every $V\in\Ker\Leg_*$,
\beann
\Lie(V)\inn(Z)\bOm_\Lag(\bD)&=& \Lie(V)\inn(Z)\sum_\alpha\inn(\Gamma_\alpha)\omega_\Lag^\alpha =
-\Lie(V)\sum_\alpha\inn(\Gamma_\alpha)\inn(Z)\omega_\Lag^\alpha
\\
&=& -\sum_\alpha\inn([V,\Gamma_\alpha])\inn(Z)\omega_\Lag^\alpha
+\sum_\alpha\inn(\Gamma_\alpha)\Lie(V)\inn(Z)\omega_\Lag^\alpha=
-\sum_\alpha\inn([V,\Gamma_\alpha])\inn(Z)\omega_\Lag^\alpha \ ,
\eeann
since $\Lie(V)\inn(Z)\omega_\Lag^\alpha=0$, because $\inn(Z)\omega_\Lag^\alpha$ is a $\Leg$-projectable
$1$-form.
But $Z\not\in[\fX^k(T^1_kQ)]^\perp=\cap_\alpha\Ker\omega_\Lag^\alpha$, by the hypothesis, and $[V,\Gamma_\alpha]\not\in\Ker\omega_\Lag$, for every $\alpha$,
since $\bD$ is an arbitrary \sopde. Therefore
$\Lie(V)\inn(Z)\bOm_\Lag(\bD)\not=0$, which implies that
$\inn(Z)\bOm_\Lag(\bD)$ and hence 
$\eta^{(Z)}_1=\inn(Z)\bOm_\Lag(\bD-\bX)$
are not $\Leg$-projectable.
\end{proof}

This means that these constraints remove degrees of freedom
on the fibers of the foliation defined
in $\Tan^1_kQ$ by the Legendre map $\Leg$
and, as a consequence, $\Leg(P_1)=\Leg(S_1)$:

\subsection{Tangency conditions: Second and further generation constraints}

In general, none of the elements of $\fX^k_{\cL,\cS}(\Tan^1_kQ)|_{S_1}$ 
may be tangent to $S_1$. 
Thus, bearing in mind Proposition \ref{solsopd},
we have to look for the subset of $S_1$ where,
given any solution $\bD\in\fX^k_{\cL,\cS}(\Tan^1_kQ)|_{S_1}$, 
we can find an element $\bV$ of $\Ker^V\bOm_\Lag$
such that it renders $\bD+\bV$ tangent to $S_1$.

At this point, the situation for $k$-presymplectic Lagrangian systems 
differs from the one in Lagrangian mechanics ($k=1$). 
The main difference is that now, to every Euler--Lagrange $k$-vector field 
we can add elements of $\Ker^V\bOm_\Lag$; but
$(\Ker\omega_\Lag^1,\ldots,\Ker\omega_\Lag^k)\subset\Ker\bOm_\Lag$,
and this means that, if $\bV=(V_\alpha)$, 
then $V_\alpha\not\in\Ker\omega_\Lag^\alpha$ necessarily.
Despite this, the origin and structure of new generation of constraints
is similar in mechanics and in field theories, as we will see below.

The submanifold $S_1$ is defined as the zero set of constraint functions, 
and hence we can easily impose tangency conditions.
So, for $\bD\in\fX^k_{\cL,\cS}(T^1_kQ)|_{S_1}$, 
bearing in mind Proposition \ref{solsopd}, we define
\begin{equation*}
\begin{split}
S_2\coloneqq\Big\{x\in S_1\ \Big| \ \exists\bV\in\Ker^V\bOm_\Lag \ \vert\  &
 (\Gamma_\alpha+V_\alpha)(\zeta^{(Z)}_1)(x)=0,\ \mbox{for every $Z\in[\fX^k(T^1_kQ)]^\perp$}\ , \\
 &(\Gamma_\alpha+V_\alpha)(\eta^{(Z)}_1)(x)=0,\ \mbox{for every $Z\in\fM\smallsetminus[\fX^k(T^1_kQ)]^\perp$}\ \Big\}\ .
\end{split}
\end{equation*}
and we assume that $S_2$ is a closed submanifold of $S_1$.

\begin{lemma} For a given $\bD\in\fX^k_{\cL,\cS}(T^1_kQ)|_{S_1}$, 
the system of equations $\Gamma_\alpha(\eta^{(Z)}_1)=-V_\alpha(\eta^{(Z)}_1)$ 
has a solution with $V_\alpha\in\Ker^V\omega_\Lag^\alpha$, on $S_1$
\end{lemma}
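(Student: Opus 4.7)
The plan is to reduce the system to a pointwise linear problem for each $\alpha$ by restricting the unknowns $V_\alpha$ to the smaller subspace $\Ker\Leg_*$, and then to exploit the non-$\Leg$-projectability of the non-dynamical constraints to obtain solvability.

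First I would invoke Lemma \ref{lemdos}: $\Ker\Leg_*\subseteq\bigcap_\alpha\Ker\omega^\alpha_\Lag$, and since $\Leg$ is fibre-preserving, every element of $\Ker\Leg_*$ is $\pi$-vertical. Hence $\Ker\Leg_*\subseteq\Ker^V\omega^\alpha_\Lag$ for every $\alpha$ simultaneously, and it is enough to look for the $V_\alpha$ within this common subspace. This decouples the system into $k$ independent existence questions, one for each index $\alpha$, each posed in a subspace that does not depend on $\alpha$.

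Fix $\alpha$ and $x\in S_1$. The problem becomes the pointwise linear question of whether the vector $\bigl(-\Gamma_\alpha(\eta^{(Z)}_1)(x)\bigr)_Z$ belongs to the image of
\[
\Psi^\alpha_x\colon (\Ker\Leg_*)_x\longrightarrow\R^{\#\{Z\}},\qquad V\longmapsto \bigl(V(\eta^{(Z)}_1)(x)\bigr)_Z ,
\]
where $Z$ ranges over a local basis of $\fM$ modulo $[\fX^k(\Tan^1_kQ)]^\perp$. Proposition \ref{nodynpro} guarantees that no $\eta^{(Z)}_1$ is annihilated by all of $\Ker\Leg_*$, so no coordinate of $\Psi^\alpha_x$ is identically zero. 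To upgrade this to actual surjectivity, I would compute $V(\eta^{(Z)}_1)$ explicitly from $\eta^{(Z)}_1=\inn(Z)\bOm_\Lag(\bD-\bX)$ using the local expression \eqref{PC-local} and $\Leg$-projectable choices of $\bX$ and $Z$ (both available by the argument used in the proof of Proposition \ref{nodynpro}): this displays $V(\eta^{(Z)}_1)$ as a bilinear pairing between the vertical components of $V$ and those of $Z$ mediated by the Hessian $\partial^2\Lag/\partial v^i_\beta\partial v^j_\alpha$.

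The main obstacle is the surjectivity of $\Psi^\alpha_x$, which reduces to showing that the Hessian-induced pairing above is non-degenerate between $(\Ker\Leg_*)_x$ and the quotient $\fM/[\fX^k(\Tan^1_kQ)]^\perp$. I expect this to follow from the almost-regularity hypothesis: the fibres of $\Leg$ are precisely the integral leaves of $\Ker\Leg_*$, and the Hessian, which encodes the infinitesimal Legendre transformation, induces a non-degenerate pairing between vertical directions transverse to $\Ker\Leg_*$ and the relevant quotient of $\fM$; this gives the matching of dimensions and full rank of $\Psi^\alpha_x$. Once pointwise solvability is in hand, a partition-of-unity argument applied to a local frame of $\Ker\Leg_*$ over $S_1$ promotes the construction to a smooth $V_\alpha$ defined on a neighbourhood of $S_1$ in $\Tan^1_kQ$.
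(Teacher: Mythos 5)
Your overall skeleton---reduce to a pointwise linear problem for vertical unknowns taken in $\Ker\Leg_*$ (which indeed lies in every $\Ker^V\omega^\alpha_\Lag$ by Lemma \ref{lemdos}), argue that the resulting linear map onto the constraint directions has maximal rank, and then globalize with a partition of unity---is the same as the paper's, which states that the matrix of partial derivatives of the independent non-dynamical constraints along the leaves of $\Ker\Leg_*$ has maximal rank and patches local solutions together. The difficulty is that the mechanism you offer for the crucial surjectivity step does not work.

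Concretely: you restrict the unknowns to $(\Ker\Leg_*)_x$ and then claim that $V(\eta^{(Z)}_1)$ is a bilinear pairing between the vertical components of $V$ and those of $Z$ ``mediated by the Hessian $\partial^2\Lag/\partial\v^i_\beta\partial\v^j_\alpha$,'' whose non-degeneracy (expected from almost-regularity) would give surjectivity. But for a vertical $V=\sum_\alpha V^i_\alpha\,\partial/\partial\v^i_\alpha$ in $\Ker\Leg_*$ one has, by the very characterization used in the proof of Lemma \ref{lemdos}, $\sum_\alpha V^i_\alpha\,\partial^2\Lag/\partial\v^i_\alpha\partial\v^j_\beta=0$: the Hessian-mediated term you want to exploit vanishes \emph{identically} on the subspace where you placed the unknowns. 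Worse, in the paper's main application (affine Lagrangians, hence Einstein--Palatini) the velocity Hessian is identically zero while the lemma must still hold; there the rank comes entirely from the $d\x^i\wedge d\x^j$ part of $\omega^\alpha_\Lag$, i.e.\ from the mixed derivatives $\partial^2\Lag/\partial\x^j\partial\v^i_\alpha$ encoded in the matrix ${\cal M}$ of Section 4.1. So surjectivity cannot be reduced to non-degeneracy of the velocity Hessian. What is actually needed---and what the paper asserts, admittedly without detail---is that a maximal set of functionally independent non-dynamical constraints has \emph{linearly independent} differentials along the leaves of $\Ker\Leg_*$; this is strictly stronger than Proposition \ref{nodynpro}, which only guarantees that each individual $\eta^{(Z)}_1$ is not annihilated by all of $\Ker\Leg_*$ (i.e.\ that no row of your matrix $\Psi^\alpha_x$ vanishes, not that the rows together with the columns give full rank). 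Your proposal recognizes this gap but closes it only with ``I expect this to follow from almost-regularity,'' which is a conjecture rather than an argument, and one resting on an incorrectly identified pairing. The final partition-of-unity globalization is fine once the pointwise maximal-rank statement is actually established.
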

\begin{proof} 
We can take a finite set of non-dynamical constraints
$\{\eta^{(Z_i)}_1\}$.
As it has been pointed out, these constraints
remove degrees of freedom
on the leaves of the distribution generated by $\Ker\Leg_*$;
that is, in the vertical leaves of $\Tan_k^1Q$.
Then, in a chart of natural coordinates, the matrix of the linear system
$\Gamma_\alpha(\eta^{(Z_i)}_1)=-V_\alpha(\eta^{(Z_i)}_1)$
(which consists in partial derivatives of the independent constraints
with respect to the coordinates in the vertical fibres)
has maximal rank and, hence,
the system is compatible, at least locally.
Then, from the local solutions we can construct global solutions
using partitions of unity.
(For the case of mechanics, see also \cite{BGPR,CLR,CLR2}).
\end{proof}

From this last result we obtain that, if new constraints appear,
out of all the conditions that define $S_2\hookrightarrow S_1$
the only new constraints arise form the conditions 
$$
(\Gamma_\alpha+V_\alpha)(\zeta^{(Z)}_1)=0 \ .
$$
Now, recall the construction made in \eqref{modsopde}
in which, to any Lagrangian $k$-vector field $\bX$ on $P_1$,
we added an element of $\Ker\bOm_\Lag$ 
so as to render it a \sopde on $S_1$. 
Thus, we can split $\bD=\bX+\bY+\bV$, 
with $\bY\in\Ker\bOm_\Lag$ and $\bV\in\Ker^V\bOm_\Lag$,
and such that $\bX+\bV$ is a Lagrangian $k$-vector field
(which can be taken to be $\Leg$-projectable).
Then, we have two different situations:
\begin{itemize}
\item[(i)]
If $Y_\alpha(\zeta^{(Z)}_1)\vert_{S_1}=0$, then the condition
$$
(X_\alpha+V_\alpha)(\zeta_1^{(Z)})\vert_{S_1}=0
$$
could determine (partially or totally) $\bV$ and\,/\,or
originate new constraints. In this last case:
\eit

\begin{definition}
The functions
\beq
\zeta_2^{(Z)}\coloneqq (X_\alpha+V_\alpha)(\zeta_1^{(Z)})
\label{2dyncon}
\eeq
are called 
\textbf{second generation $k$-presymplectic} or
\textbf{dynamical constraints}.
\end{definition}

As can be seen, these new constraints would arise from demanding the tangency condition
of the Lagrangian $k$-vector field $\bX+\bV$ on the submanifold $P_1$ and are independent on the \sopde condition.
Thus, together with the first generation dynamical constraints
$\zeta_1^{(Z)}$,
they define a submanifold $P_2\hookrightarrow P_1$
where there exist Lagrangian $k$-vector fields
(solutions to the Lagrangian equations \eqref{k-sym_LagEq},
not necessarily \sopde),
which are tangent to $P_1$ on the points of $P_2$,
but that are generally not tangent to $P_2$.

Furthermore, as for the first generation dynamical constraints, we have:

\begin{prop}
Second generation dynamical constraints can be expressed as $\Leg$-projectable functions.
\end{prop}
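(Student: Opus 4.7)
The plan is to verify, as in the proof of Proposition \ref{dynpro}, the characterization that $f \in \Cinfty(\Tan^1_kQ)$ is $\Leg$-projectable if and only if $\Lie(U)f = 0$ for every $U \in \Ker\Leg_*$. I want to check this property directly on $\zeta_2^{(Z)}$, exploiting the freedom in the choice of the $k$-vector fields appearing in its definition \eqref{2dyncon}.

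First, I would apply Proposition \ref{dynpro} to pick a basis of first generation dynamical constraints $\{\zeta_1^{(Z)}\}$ that are themselves $\Leg$-projectable; equivalently, the local generators $Z$ of $\bigcap_\alpha\Ker\omega_\Lag^\alpha$ are chosen $\Leg$-projectable. Next, I would invoke the observation recorded just before the statement: the Lagrangian $k$-vector field $\bX + \bV$ can be selected to be $\Leg$-projectable. This is obtained by lifting a solution $\bX_0 \in \vf^k(P_0)$ of the Hamiltonian equations \eqref{hameqs} through the submersion $\Leg_0$, following the construction already used at the beginning of the proof of Proposition \ref{nodynpro}.

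With these two choices, for an arbitrary $U \in \Ker\Leg_*$ and each index $\alpha$, I would expand
\begin{align*}
\Lie(U)\zeta_2^{(Z)} &= \Lie(U)\bigl[(X_\alpha+V_\alpha)(\zeta_1^{(Z)})\bigr] \\
&= [U,\,X_\alpha+V_\alpha](\zeta_1^{(Z)}) + (X_\alpha+V_\alpha)\bigl(\Lie(U)\zeta_1^{(Z)}\bigr).
\end{align*}
The second term vanishes because $\zeta_1^{(Z)}$ is $\Leg$-projectable. For the first, the general fact already used in the proof of Proposition \ref{dynpro} tells us that the bracket of a field in $\Ker\Leg_*$ with a $\Leg$-projectable field lies again in $\Ker\Leg_*$; hence, applied to the $\Leg$-projectable function $\zeta_1^{(Z)}$, this term vanishes too. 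We conclude that $\Lie(U)\zeta_2^{(Z)}=0$ for every $U \in \Ker\Leg_*$, establishing $\Leg$-projectability.

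The main obstacle I foresee is not the bracket manipulation itself but justifying the auxiliary claim supporting step two: that the decomposition $\bD = \bX + \bY + \bV$ can genuinely be arranged so that $\bX + \bV$ is $\Leg$-projectable while keeping $\bV \in \Ker^V\bOm_\Lag$. Once this arrangement is secured, the proposition reduces to a formal propagation of $\Leg$-projectability along a $\Leg$-projectable derivation, completely parallel to the first-generation case.
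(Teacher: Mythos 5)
Your argument is correct and is essentially the paper's own proof, which simply states that the result is immediate because both the solution $\bX+\bV$ and the constraints $\zeta_1^{(Z)}$ in \eqref{2dyncon} can be taken $\Leg$-projectable; you have merely unfolded the standard fact that the derivative of a $\Leg$-projectable function along a $\Leg$-projectable vector field is again $\Leg$-projectable, using the same bracket identity and the same lifting construction from the proofs of Propositions \ref{dynpro} and \ref{nodynpro}. The auxiliary claim you flag (that $\bX+\bV$ can be arranged to be $\Leg$-projectable) is exactly what the paper also relies on, justified by lifting a Hamiltonian solution through the submersion $\Leg_0$.
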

\begin{proof}
It is immediate since, in the expression \eqref{2dyncon}, 
the solution $\bX+\bV$ and the constraints $\zeta_1^{(Z)}$ 
can be taken $\Leg$-projectable.
\end{proof}

\bit
\item[(ii)]
If $Y_\alpha(\zeta^{(Z)}_1)\vert_{S_1}\neq 0$, then 
$$
(X_\alpha+Y_\alpha+V_\alpha)(\zeta^{(Z)}_1)\vert_{S_1}=0
$$
could determine (partially or totally) $\bV$ and\,/\,or
originate new constraints that appear as a consequence of the \sopde condition and hence:
\end{itemize} 
\begin{definition}
The functions
$$
\eta^{(Z)}_2\coloneqq(X_\alpha+Y_\alpha+V_\alpha)(\zeta^{(Z)}_1)
$$
are called \textbf{second generation \sopde} or
\textbf{non-dynamical constraints}.
\end{definition}

And, as for the second generation non dynamical constraints we have:

\begin{prop}
Second generation \sopde constraints are not $\Leg$-projectable.
\end{prop}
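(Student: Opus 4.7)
The plan is to mirror the argument of Proposition \ref{nodynpro}. Using the decomposition $\bD=\bX+\bY+\bV$ introduced before the definition, with $\bX+\bV$ a Lagrangian $k$-vector field that can be chosen $\Leg$-projectable and $\bY\in\Ker\bOm_\Lag$ the piece added to enforce the \textsc{sopde} condition, I would split
\[
\eta^{(Z)}_2 \;=\; (X_\alpha+V_\alpha)(\zeta^{(Z)}_1) + Y_\alpha(\zeta^{(Z)}_1) \;=\; \zeta^{(Z)}_2 + Y_\alpha(\zeta^{(Z)}_1).
\]
The preceding proposition shows that the first summand $\zeta^{(Z)}_2$ is $\Leg$-projectable, so the whole claim reduces to showing that the remainder $Y_\alpha(\zeta^{(Z)}_1)$ is not $\Leg$-projectable.

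Recall that a function $f\in\Cinfty(\Tan^1_kQ)$ is $\Leg$-projectable if and only if $\Lie(V)f=0$ for every $V\in\Ker\Leg_*$. Since $\zeta^{(Z)}_1$ is $\Leg$-projectable by Proposition \ref{dynpro}, one has
\[
\Lie(V)\bigl(Y_\alpha(\zeta^{(Z)}_1)\bigr) \;=\; [V,Y_\alpha](\zeta^{(Z)}_1) + Y_\alpha\bigl(V(\zeta^{(Z)}_1)\bigr) \;=\; [V,Y_\alpha](\zeta^{(Z)}_1),
\]
so the problem reduces to exhibiting $V\in\Ker\Leg_*$ with $[V,Y_\alpha](\zeta^{(Z)}_1)\neq 0$. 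The hypothesis defining case (ii), $Y_\alpha(\zeta^{(Z)}_1)\vert_{S_1}\neq 0$, encodes precisely that $Y_\alpha$ acts non-trivially on the generators of the dynamical constraints; moreover $\bY$ cannot itself be $\Leg$-projectable, for if it were then $Y_\alpha(\zeta^{(Z)}_1)$ would already be $\Leg$-projectable and would collapse into $\zeta^{(Z)}_2$, putting us back in case (i).

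The main obstacle is the final step: translating the non-$\Leg$-projectability of $\bY$ into the concrete non-vanishing of $[V,Y_\alpha](\zeta^{(Z)}_1)$ for a suitable $V\in\Ker\Leg_*$. This parallels the closing computation of Proposition \ref{nodynpro}, where an analogous bracket $[V,\Gamma_\alpha]$ was shown not to lie in $\bigcap_\beta\Ker\omega^\beta_\Lag$ by exploiting the arbitrariness of the \textsc{sopde} $\bD$ together with Lemma \ref{lemdos}. The same two ingredients — the structure of $\Ker\Leg_*$ inside $\Ker\bOm_\Lag$ and the freedom in the choice of \textsc{sopde} correction $\bY$ — should suffice to close the argument here.
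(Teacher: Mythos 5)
Your reduction is sound and is essentially the computation the paper itself performs, just packaged through the splitting $\eta^{(Z)}_2=\zeta^{(Z)}_2+Y_\alpha(\zeta^{(Z)}_1)$ rather than through the full $\Gamma_\alpha$; the Leibniz step $\Lie(V)\bigl(Y_\alpha(\zeta^{(Z)}_1)\bigr)=[V,Y_\alpha](\zeta^{(Z)}_1)$ for $V\in\Ker\Leg_*$ is correct and matches the paper's identity $\Lie(V)\Lie(\Gamma_\alpha)\zeta^{(Z)}_1=\Lie([V,\Gamma_\alpha])\zeta^{(Z)}_1+\Lie(\Gamma_\alpha)\Lie(V)\zeta^{(Z)}_1$ once the projectable pieces are stripped off. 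But you stop exactly at the point that carries the content of the proposition: you never exhibit a $V\in\Ker\Leg_*$ with $[V,Y_\alpha](\zeta^{(Z)}_1)\neq 0$, and saying that the two ingredients ``should suffice to close the argument'' is not a proof. Note also that the hypothesis of case (ii), $Y_\alpha(\zeta^{(Z)}_1)\vert_{S_1}\neq 0$, does not by itself give you anything about the \emph{bracket} $[V,Y_\alpha]$; a function can be nonzero and still be $\Leg$-projectable, so that hypothesis cannot substitute for the missing step.

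The ingredient the paper uses to close the gap is a coordinate inspection of the \textsc{sopde} condition. Since $\Ker\Leg_*\subset\fX^V(\Tan^1_kQ)$, a generator can be written $V=V^i_\beta\,\partial/\partial\v^i_\beta$, while the horizontal part of the \textsc{sopde} $\Gamma_\alpha$ is $\v^i_\alpha\,\partial/\partial\x^i$; hence $[V,\Gamma_\alpha]$ acquires the horizontal component $V^i_\alpha\,\partial/\partial\x^i\neq 0$, so $[V,\Gamma_\alpha]\notin\Ker\Leg_*$ and it acts nontrivially on the (nonconstant, $\Leg$-projectable) constraint $\zeta^{(Z)}_1$. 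In your formulation the same computation applies to $Y_\alpha$: because $\bX+\bV$ is chosen $\Leg$-projectable and $\bV$ is $\pi$-vertical, $V$ annihilates the horizontal coefficients $(X_\alpha)^i$, so the horizontal part of $[V,Y_\alpha]$ is again exactly $V^i_\alpha\,\partial/\partial\x^i$. Writing this out is what turns your outline into a proof; as it stands, the argument is an accurate reduction followed by an unproved claim.
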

\begin{proof}
For every $V\in\Ker\Leg_*$, we have
$$
\Lie(V)\eta^{(Z)}_2=
\Lie(V)\Lie(\Gamma_\alpha)\zeta^{(Z)}_1=
-\Lie([V,\Gamma_\alpha])\zeta^{(Z)}_1+\Lie(\Gamma_\alpha)\Lie(V)\zeta^{(Z)}_1 \ .
$$
But, as $\zeta^{(Z)}_1$ can be taken $\Leg$-projectable, then
$\Lie(V)\zeta^{(Z)}_1\not=0$. Furthermore,
a simple inspection of their coordinate expressions show
that \sopde $k$-vector fields are not $\Leg$-projectable,
hence $[V,\Gamma_\alpha]\not\in\Ker\Leg_*$ and, in particular,
$[V,\Gamma_\alpha]\not=0$.
Therefore we conclude that $\Lie(V)\eta^{(Z)}_2\not=0$
and then $\eta^{(Z)}_2$ is not $\Leg$-projectable.
\end{proof}

Thus we have that
$$
S_2\coloneqq\big\{x\in S_1\ \big|\ \eta_2^{(Z)}(x)=0,\ \zeta_2^{(Z)}(x)=0,
\ \mbox{for every $Z\in[\fX^k(T^1_kQ)]^\perp$}\big\} \ ,
$$
and we can find Euler--Lagrange $k$-vector fields that are tangent to $S_1$ on the points of $S_2$,
but that are generally not tangent to $S_2$.

At this point, we are in a situation equivalent to the one before imposing tangency. 
We therefore keep repeating the last step; that is, imposing  tangency,
At every step of the algorithm we repeat the same reasoning
and we obtain similar results than in the above step.
In particular, we find new constraints
which, in general, split into two groups: the {\sl dynamical}
and the {\sl \sopde constraints} which,
as above, are also characterized by the fact of being
$\Leg$-projectable or non $\Leg$-projectable, respectively.
Both of them arise from the tangency condition of the previous
dynamical constraints,
whereas the tangency condition on the previous
\sopde constraints does not give new constraints.

The procedure continues until the algorithm stabilizes, i.e. until $S_{i+1}=S_i\eqqcolon S_f$.
The only interesting case from the physical 
point of view is when $S_f$ is a submanifold of $\Tan^1_kQ$, 
which is called the \emph{\textbf{final constraint submanifold}}.
On it, we can find solutions to the second-order problem for the Lagrangian system $(\Tan^1_kQ,\Lag)$.

\subsection{On the $\Leg$-projectability and integrability of solutions}

A remaining problem concerns to the $\Leg$-projectability of
the \sopde solutions found on the final constraint submanifold $S_f$ of the algorithm.
In fact a necessary condition for the existence of 
$\Leg$-projectable \sopde $k$-vector fields 
which are solutions to the Lagrangian equations \eqref{k-sym_LagEq}
on the final constraint submanifold $S_f$
is that $S_f$ only contain one point in every fibre 
of the foliation defined by $\Ker\Leg_*$
(i.e; the fibres of the Legendre map).
In fact, suposse that this condition does not hold
and there are two points $x_A,x_B\in S_f$ in the same fibre
of the foliation, whose coordinates are $(q^i,v_{A\mu}^i)$
and $(q^i,v_{B\mu}^i)$, respectively.
If $\bD=(\Gamma_\alpha)$ is a \sopde solution, then
$$
\Gamma_\alpha(x_A)=v_{A\alpha}^i\derpar{}{q^i}\Big\vert_{x_A}+
\sum_{\beta=1}^k(X_\alpha)^i_\beta(x_A)\frac{\partial}{\partial \v^i_\beta}\Big\vert_{x_A}
\ , \
\Gamma_\alpha(x_B)=v_{B\alpha}^i\derpar{}{q^i}\Big\vert_{x_B}+
\sum_{\beta=1}^k(X_\alpha)^i_\beta(x_B)\frac{\partial}{\partial \v^i_\beta}\Big\vert_{x_B}\ ,
$$
and $\Leg(x_A)=\Leg(x_B)$, 
but $\Leg_*(\Gamma_\alpha(x_A))\not=\Leg_*(\Gamma_\alpha(x_B))$;
hence $\Gamma_\alpha$ is not $\Leg$-projectable.
In conclusion, the submanifold of $S_f$ where
these $\Leg$-projectable \sopde solutions exist must be
diffeomorphic to the quotient $S_f/\Ker\Leg_*$.
(This problem was studied in detail for the case of mechanics, in \cite{got80}).

As a final remark and in order to solve the problem completely, 
we want the \sopde $k$-vector fields on $S_f$ to be integrable. 
In general, the existence of such $k$-vector fields is not assured 
(even if the Lagrangian is regular). 
Thus, if $\bX\in\fX^k_{\cL,\cS}(\Tan^1_kQ)|_{S_f}$ with $\bX=(X_1,\ldots,X_k)$, 
the necessary and sufficient condition for $\bX$ to be integrable is
$[X_\alpha,X_\beta]\vert_{S_f}=0$.
In most cases, these conditions lead to some relations among 
the remaining arbitrary coefficients of the family of solutions; 
but, in some cases, they can originate new constraints 
that define a new submanifold $S_f'\hookrightarrow S_f$. 
In this case, the tangency of the integrable family of solutions 
has to be checked and the algorithm restarts once again.


\section{The Einstein-Palatini model of General Relativity}

One of the most interesting singular classical field theories in physics
which can be described using the $k$-(pre)symplectic formulation
is the {\sl Einstein-Palatini model} of General Relativity;
also known as the {\sl metric-affine model}.
As we will see, it is described by an affine Lagrangian.
Then, in order to study this model using  the constraint algorithm,
it is very relevant to apply it first to the
generic case of $k$-presymplectic Lagrangian systems described by 
affine Lagrangians.

\subsection{Affine Lagrangians}
\label{subsubsect:GC}

An {\sl affine Lagrangian} is the sum of two functions:
a linear function $T^1_kQ \to \R$ on the fibers of the bundle $\Tan^1_kQ\to Q$ 
and the pullback to $\Tan^1_kQ$ of another arbitrary function in $Q$.
In natural coordinates, $(\x^i,\v^i_\alpha)$, it has the shape
\beq
\cL=\sum_{\alpha=1}^kF^\alpha_i(\x^1,\ldots,\x^k)\v_\alpha^i+G(\x^1,\ldots,\x^k) \ ;
\label{aflag}
\eeq
and then we have
\beann
\Delta(\cL)=\sum_{\alpha=1}^k\v^i_\alpha F^\alpha_i
\quad \Longrightarrow\quad \cE_\cL=-G
\quad \Longrightarrow\quad
\d\cE_\cL=-\frac{\partial G}{\partial\x^i}\d\x^i ,
\\ 
\theta_\cL^\alpha=\frac{\partial\cL}{\partial\v^i_\alpha}d\x^i=F_i^\alpha d\x^i
\quad \Longrightarrow\quad
\omega^\alpha_\cL=-\frac{\partial F^\alpha_i}{\partial\x^j}\ d\x^j\wedge d\x^i \ .
\eeann
For a $k$-vector field $X=(X_\alpha)\in\vf^k(\Tan^1_kQ)$, if
$\ds X_\alpha=(X_\alpha)^i\frac{\partial}{\partial\x^i}+
\sum_{\beta=1}^k(X_\alpha)^i_\beta\frac{\partial}{\partial\v^i_\beta}$,
the Lagrangian equation \eqref{k-sym_LagEq} read
\beq
\label{aff-Leq}
\Bigg[\sum_{\alpha=1}^k(X_\alpha)^i\bigg(\frac{\partial F^\alpha_i}{\partial\x^j}-
\frac{\partial F^\alpha_j}{\partial\x^i}\bigg)+
\frac{\partial G}{\partial\x^j}\Bigg]\,\d\x^j=0 
\ \Longleftrightarrow\
\sum_{\alpha=1}^k(X_\alpha)^i\bigg(\frac{\partial F^\alpha_i}{\partial\x^j}-
\frac{\partial F^\alpha_j}{\partial\x^i}\bigg)+
\frac{\partial G}{\partial\x^j}=0 \ .
\eeq
The behaviour of the system depends on the rank of the matrix 
$\ds{\cal M}=(M^\alpha_{ij})=\left(\frac{\partial F^\alpha_i}{\partial\x^j}-\frac{\partial F^\alpha_j}{\partial\x^i}\right)$,
which is assumed to be constant (otherwise the analysis must be done 
on each one of the subsets of $\Tan^1_kQ$ where it is constant).
Now, we follow the steps of the algorithm.

\noindent\textbf{First-generation dynamical constraints:}
The algorithm tells us that
$$
P_1=\big\{x\in\Tan^1_kQ\ \big|\ (\inn(Z)\d\cE_\cL)(x)=0,\ \mbox{\rm for every $Z\in[\fX^k(\Tan^1_kQ)]^\perp$} \big\}.
$$

If the rank of ${\cal M}$ is maximal; i.e. ${\rm rank}\,{\cal M}=n$, 
there are no dynamical constraints and
the equations (\ref{aff-Leq}) give $n$ of the coefficients $(X_\alpha)^i$ as functions of the remaining ones. 
In fact, in this case we have
$$
[\fX^k(\Tan^1_kQ)]^\perp=\bigcap_{\alpha=1}^k\Ker\omega_\Lag^\alpha=\fX^V(\Tan^1_kQ) \ ;
$$
hence, for every $Z\in[\fX^k(\Tan^1_kQ)]^\perp$, 
the function $\inn(Z)\d\cE_\cL$ vanishes everywhere on $\Tan^1_kQ$
because $Z$ are $\pi$-vertical vector fields and 
$d\cE_\Lag$ is a $\pi$-horizontal form;
then there are no dynamical constraints and $P_1=\Tan^1_kQ$.

If ${\rm rank}\,{\cal M}<n$,
the elements of $[\fX^k(\Tan^1_kQ)]^\perp$ are 
$$
Z=Z^i\frac{\partial}{\partial\x^i}+
\sum_{\beta=1}^kZ^i_\beta\frac{\partial}{\partial\v^i_\beta}
\quad , \quad
\mbox{\rm with $\ds\left(\frac{\partial F^\alpha_i}{\partial\x^j}-\frac{\partial F^\alpha_j}{\partial\x^i}\right)Z^j=0$} \ ,
$$
and dynamical constraints may arise:
$$
\zeta^{(Z)}_1=\inn(Z)\d\cE_\Lag=\frac{\partial G}{\partial\x^i}Z^i \ .
$$

If ${\rm rank}\,{\cal M}=0$ ($M^\alpha_{ij}=0$), 
then $[\fX^k(\Tan^1_kQ)]^\perp=\fX(\Tan^1_kQ)$
and, hence, for each $\ds Z_i=\derpar{}{\x^i}$,
we get a dynamical constraint
\beq
\zeta^{(Z_i)}_1\equiv\zeta^i_1=\inn(Z_i)\d\cE_\Lag=\frac{\partial G}{\partial\x^i}\ .
\label{1stcon}
\eeq
Observe that these constraints are obtained also directly from the equation \eqref{aff-Leq}.

\noindent\textbf{First-generation non-dynamical constraints:}
Following the algorithm, for any Lagrangian $k$-vector field $\bX$ and any $\bY\in\fX^k(\Tan^1_kQ)$ such that $\bX+\bY$ is a \textsc{sopde}, we have
$$
S_1=\big\{x\in\Tan^1_kQ\ \big|\ \big(\inn(Z)\bOm_\Lag(\bY)\big)(x)=0,\ \mbox{\rm for every $Z\in\fM$} \big\} \ .
$$
For affine Lagrangians, as $\fX^{kV}(T^1_kQ)\subseteq\Ker\bOm_\Lag$, 
we have $\fM=\fX(\Tan^1_kQ)$. 
Since all vector fields are in $\fM$, the functions defining $S_1$ in $P_1$ 
are obtained simply imposing 
$\bOm_\Lag(\bY)=0$. For simplicity we can take
$$
Y_\alpha=\big[\v^i_\alpha-(X_\alpha)^i\big]\frac{\partial}{\partial\x^i} \ .
$$
Then the constraints read
$$
\bOm_\Lag(\bY)=\sum_{\alpha=1}^k\big[\v^i_\alpha-(X_\alpha)^i\big]M^\alpha_{ij}\d\x^j=0
\quad \Longleftrightarrow \quad
\sum_{\alpha=1}^k\big[\v^i_\alpha-(X_\alpha)^i\big]M^\alpha_{ij}=0\ .
$$
As $\bX$ satisfies (\ref{aff-Leq}), we have the following set of equations
\beq{}\label{eq:ndGC}
\eta^j_1\equiv\sum_{\alpha=1}^k\v^i_\alpha M^\alpha_{ij}+\frac{\partial G}{\partial\x^j}=0\quad , \quad \mbox{\rm (on $P_1$)} \ ;
\eeq{}
and, depending on the rank of ${\cal M}$, these equations give new constraints or not;
in particular:
\bit
\item
For $0<{\rm rank}\,{\cal M}\leq n$, these equations are not satisfied in the whole $P_1$ 
and they give \sopde first generation constraints that define the submanifold 
$S_1\hookrightarrow P_1$.
The number of independent constraints is fixed by
${\rm rank}\,{\cal M}$.
\item
For $M^\alpha_{ij}=0$ (${\rm rank}\,{\cal M}=0$),
bearing in mind \eqref{1stcon},
the equalities hold everywhere in $P_1$; 
so no \sopde constraints appear in this case and $S_1=P_1$.
\eit

Notice that \eqref{eq:ndGC} are just the field equations \eqref{aff-Leq}
for \sopde $k$-vector fields which,
when they are integrable (i.e., holonomic),
these equations are the Euler-Lagrange field equations.
This means that, for affine Lagrangians,
the field equations are recovered as constraints of the theory. This is in accordance with a similiar result in \cite{GR-2016}

\noindent\textbf{Tangency conditions and further generation of constraints:}
At this point, we have \sopde $k$-vector fields $\bG=(\Gamma_\alpha)\in\vf^k(\Tan^1_kQ)$ 
which are Euler--Lagrange $k$-vector field on $S_1$. 
Now, we impose tangency and thus,
for \ 
$\ds\Gamma_\alpha=\v^i_\alpha\frac{\partial}{\partial\x^i}+\sum_{\beta=1}^k(\Gamma_\alpha)^i_\beta\frac{\partial}{\partial\v^i_\beta}$:
\bit
\item
The tangency conditions on the first-generation dynamical constraints $\zeta^{(Z_k)}_1$ are
\beq
\zeta_2^{(\alpha;k)}\equiv\Gamma_\alpha(\zeta^{(Z_k)}_1)=0 
\quad , \quad \mbox{\rm (on $S_1$)} \ ;
\label{tancondyn}
\eeq
and, if these conditions do not hold on $S_1$, they give new constraints
that define the submanifold $S_2\hookrightarrow~S_1$.
\item
When $0<{\rm rank}\,M=r\leq n$
there are also first-generation non-dynamical constraints $\eta^k_1$ and
the tangency conditions for them  lead to the equations
$$
\Gamma_\alpha(\eta^k_1)=\v^i_\alpha\v^j_\gamma\frac{\partial M^\gamma_{jk}}{\partial\x^i}+\sum_{\beta=1}^k(\Gamma_\alpha)^i_\beta M^\beta_{ik}+\v^i_\alpha\frac{\partial^2G}{\partial\x^i\partial\x^k}=0
\quad , \quad \mbox{\rm (on $S_1$)} \ .
$$
As the number of independent first-generation constraints is equal to $r={\rm rank}\,M$,
this system is compatible and thus
the tangency conditions determine $r$ of the coefficients $(\Gamma_\alpha)^i_\beta$
as functions of the remaining ones, 
and no new constraints appear.
\eit

In the case that new constraints have arisen,
we continue demanding the tangency of the solutions until the algorithm stabilizes.

\subsubsection{An academical example}

Consider a manifold $Q$ with local coordinates $(\x^1,\x^2)$ and its 2-tangent bundle $\Tan_2^1Q$ with local coordinates $(\x^1,\x^2,\v^1_1,\v^2_1,\v^1_2,\v^2_2)$. We consider the Lagrangian
$$
\cL=\x^2\v_1^1-\x^1\v^2_2+\x^1\x^2.
$$
For this Lagrangian $F^2_1=F^1_2\equiv 0$, $F^1_1=\x^2$, $F^2_2=-\x^1$, and $G=\x^1\x^2$. We get
$$
\d\cE_\Lag=-\x^2d\x^1-\x^1d\x^2\quad , \quad \omega_\Lag^1=\omega_\Lag^2=d\x^1\wedge d\x^2.
$$
We also have
$$
{\cal M}\equiv\begin{pmatrix}
M^1_{11} & M^1_{21} & M^2_{11} & M^2_{21}\\
M^1_{12} & M^1_{22} & M^2_{12} & M^2_{22}\end{pmatrix}=\begin{pmatrix}
0 & -1 & 0 & -1\\
1 & 0 & 1 & 0
\end{pmatrix},
$$
which indeed has maximal rank, and $\ds\bigcap_{\alpha=1}^2\Ker\omega_\Lag^\alpha=\fX^{kV}(\Tan_2^1Q)$.
From the above general discussion for affine Lagrangians 
we know that no dynamical constraints appear. 
Indeed, the Lagrangian equation has solutions
$\bX=(X_1,X_2)\in\vf^2(\Tan^1_2Q)$ everywhere in $\Tan^1_2Q$
given by
$$
X_\alpha=(X_\alpha)^i\derpar{}{q^i}+
\sum_{\beta=1}^k(X_\alpha)^i_\beta\frac{\partial}{\partial\v^i_\beta}
\quad , \quad \mbox{\rm where \quad $(X_1)^2+(X_2)^2=\x^2,\ (X_1)^1+(X_2)^1=-\x^1$} \ .
$$
Equivalently, the algorithm tells us that the first generation dynamical constraints come from $\inn(Z)\d\cE_\Lag=0$,
for $\ds Z\in\bigcap_{\alpha=1}^2\Ker\omega_\Lag^\alpha$; 
but these equalities hold trivially because $Z$ are 
$\pi$-vertical vector fields and 
$d\cE_\Lag$ is a $\pi$-horizontal form.

However, there are \sopde constraints.
In fact, from \eqref{eq:ndGC},
$$
\eta^j_1\coloneqq\sum_{\alpha=1}^k\v^i_\alpha M^\alpha_{ij}+\frac{\partial G}{\partial\x^j}=\begin{pmatrix}
0 & -1 & 0 & -1\\
1 & 0 & 1 & 0
\end{pmatrix}\begin{pmatrix}
\v_1^1\\
\v_1^2\\
\v_2^1\\
\v_2^2
\end{pmatrix}+\begin{pmatrix}
\x^2 \\ \x^1
\end{pmatrix}=0 
$$
and we get the constraints
$$
\eta^1_1=-\v^2_1-\v^2_2+\x^2=0 \quad ,\quad \eta_1^2=\v_1^1+\v_2^1+\x^1=0 \ ,
$$
which define the submanifold $S_1$. 
We get no conditions on the coefficients of the solutions, and hence
we have that any \sopde is an Euler--Lagrange $2$-vector field on $S_1$.

Let us now impose tangency. According to the general discussion, 
it only gives conditions on the coefficients of the general solution 
$\dst\Gamma_\alpha=\v^\mu_\alpha\frac{\partial}{\partial\x^\mu}+
\sum_{\beta=1}^k(X_\alpha)^\mu_\beta\frac{\partial}{\partial\v^\mu_\beta}$.
Indeed,
\beann
\Gamma_1(\eta_1^1)=\big(\v_1^2-(X_1)^2_1-(X_1)^2_2\big)\vert_{S_1}=0 
& \Longleftrightarrow & (X_1)^2_2\vert_{S_1}=\v_1^2-(X_1)^2_1,
\\
\Gamma_1(\eta_2^1)=\big(\v_1^1+(\Gamma_1)^1_1+(\Gamma_1)^1_2\big)\vert_{S_1}=0
& \Longleftrightarrow & (X_1)^1_2\vert_{S_1}=-\v_1^1-(X_1)^1_1,
\\
\Gamma_2(\eta_1^1)=\big(\v_2^2-(X_2)^2_1-(X_2)^2_2\big)\vert_{S_1}=0
& \Longleftrightarrow & (X_2)^2_2\vert_{S_1}=\v_2^2-(X_2)^2_1,
\\
\Gamma_2(\eta_2^1)=\big(\v_2^1+(X_2)^1_1+(X_2)^1_2\big)\vert_{S_1}=0
& \Longleftrightarrow &
(X_2)^1_2\vert_{S_1}=-\v_2^1-(X_2)^1_1 \ .
\eeann
Summing up, we can find Euler--Lagrange $2$-vector fields only on
$$
S_1=\big\{x\in\Tan^1_2Q\ \big|\ \x^1(x)=-\v_1^1(x)-\v_2^1(x),\ \x^2(x)=+\v^2_1(x)+\v^2_2(x)\big\}
$$
which are $\bG=(\Gamma_1,\Gamma_2)$ with
$$
\Gamma_1\big|_{S_1}=\v^1_1\frac{\partial}{\partial\x^1}+\v^2_1\frac{\partial}{\partial\x^2}+A\frac{\partial}{\partial\v^1_1}+B\frac{\partial}{\partial\v^2_1}-(\v_1^1+A)\frac{\partial}{\partial\v^1_2}+(\v^2_1-B)\frac{\partial}{\partial\v^2_2},
$$
$$
\Gamma_2\big|_{S_1}=\v^1_2\frac{\partial}{\partial\x^1}+\v^2_2\frac{\partial}{\partial\x^2}+C\frac{\partial}{\partial\v^1_1}+D\frac{\partial}{\partial\v^2_1}-(\v_2^1+C)\frac{\partial}{\partial\v^1_2}+(\v^2_2-D)\frac{\partial}{\partial\v^2_2},
$$
where $A,B,C,D$ are free functions. 
If we want this family of solutions to be integrable we must impose
$$
[\Gamma_1,\Gamma_2]\vert_{S_1}=0 \ ,
$$
and this equation leads to a system of partial differential equations on the functions $A,B,C,D$.


\subsection{The Einstein-Palatini model}

The Einstein-Palatini model is a first order singular field theory. 
A multisymplectic formulation of the model has been developed
in several works (see, for instance \cite{Capriotti, Capriotti2,MRosado-2013,GR-2018}).
In particular, in \cite{GR-2018} the constraints arising from the application
of the constraint algorithm in the (pre)multisymplectic framework 
have been computed explicitly in coordinates. 
They have a diverse origin and characteristics, which makes this system an interesting test for the theory developed in this article. Moreover, we provide an intrinsic characterization of the constraints.

The configuration bundle for this system is the bundle $\pi\colon{\rm E}\rightarrow M$, 
where $M$ is a connected orientable 4-dimensional manifold representing space-time, 
whose volume form is denoted $\eta\in\df^4(M)$, and
${\rm E}=\Sigma\times_MC(LM)$, where $\Sigma$ is the manifold of Lorentzian metrics on $M$,with signature $(-+++)$, 
and $C(LM)$ is the bundle of connections on $M$;
that is, linear connections in $\Tan M$.

Consider natural coordinates $(x^\mu)$ in $M$ such that 
$\eta=\d x^0\wedge\ldots\wedge\d x^3\equiv\d^4x$. 
We use adapted fiber coordinates in ${\rm E}$,
denoted $(x^\mu,g_{\alpha\beta},\Gamma^\nu_{\lambda\gamma})$,
and then the coordinates in $\Tan^1_kQ$ are
$(x^\mu,g_{\alpha\beta},\Gamma^\nu_{\lambda\gamma},g_{\alpha\beta,\mu},\Gamma^\nu_{\lambda\gamma,\mu})$
(with $\mu,\nu,\gamma,\lambda=0,1,2,3$
and $0\leq\alpha\leq\beta\leq 3$, 
where this ordering rules for sum over symmetric indices and not over all the components). 
The functions $g_{\alpha\beta}$ are the components of the metric associated to the charts in the base $(x^\mu)$,
and $\Gamma^\nu_{\lambda\gamma}$ are the Christoffel symbols of the connection. 
We do not assume torsionless connections and hence
$\Gamma^\nu_{\lambda\gamma}\neq\Gamma^\nu_{\gamma\lambda}$,
in general.

The Einstein-Palatini Lagrangian is 
$$
L_{\rm EP}=\sqrt{|{\rm det}(g)|}\,g^{\alpha\beta}R_{\alpha\beta}\equiv
\varrho g^{\alpha\beta}R_{\alpha\beta}=\varrho\,R \ ,
$$
where $\varrho=\sqrt{|det(g_{\alpha\beta})|}$, 
$R=g^{\alpha\beta}R_{\alpha\beta}$ is the {\sl scalar curvature},
$R_{\alpha\beta}=
\Gamma^{\gamma}_{\beta\alpha,\gamma}-\Gamma^{\gamma}_{\gamma\alpha,\beta}+
\Gamma^{\gamma}_{\beta\alpha}\Gamma^{\sigma}_{\sigma\gamma}-
\Gamma^{\gamma}_{\beta\sigma}\Gamma^{\sigma}_{\gamma\alpha}$
are the components of the {\sl Ricci tensor}, which depend only on the connection, and
$g^{\alpha\beta}$ denotes the inverse matrix of $g$, 
namely: $g^{\alpha\beta}g_{\beta\gamma}=\delta^\alpha_\gamma$.

The Einstein-Palatini Lagrangian is affine, thus
we will follow the guidelines developed in Section \ref{subsubsect:GC} 
for these kinds of Lagrangians to streamline the process. 
In this case, the functions  $F^\alpha_i$ and $G$ appearing in \eqref{aflag} are
$$
F^{\beta\gamma,\mu}_\alpha=
\varrho(\delta_\alpha^\mu g^{\beta\gamma}-\delta_\alpha^\beta g^{\mu\gamma})
\quad , \quad 
G=\varrho g^{\alpha\beta}(\Gamma^{\gamma}_{\beta\alpha}\Gamma^{\sigma}_{\sigma\gamma}-
\Gamma^{\gamma}_{\beta\sigma}\Gamma^{\sigma}_{\gamma\alpha})\ .
$$
Notice that we only have components on the fiber coordinates 
$\Gamma^\nu_{\lambda\gamma,\mu}$, but not on $g_{\alpha\beta,\mu}$.
Therefore, the matrix $\ds{\cal M}=\left(\frac{\partial F^\alpha_i}{\partial\x^j}-\frac{\partial F^\alpha_j}{\partial\x^i} \right)$
introduced in Section  \ref{subsubsect:GC}
has constant rank, $0<{\rm rank}\,{\cal M}<n$.
Then, we have that
\beann
\cE_\cL=-G
\quad &\Longrightarrow&\quad
\d\cE_\cL=-\d G ,
\\ 
\theta_{L_{\rm EP}}^\mu=F^{\beta\gamma,\mu}_\alpha\,\d\ \Gamma^{\alpha}_{\beta\gamma}
\quad &\Longrightarrow&\quad
\omega^\mu_{L_{\rm EP}}=
-\d F^{\beta\gamma,\mu}_{\alpha}\wedge\d \Gamma^{\alpha}_{\beta\gamma}\ ;
\eeann
and the solutions to the Einstein equations in the Einstein-Palatini approach
are the integral sections of holonomic $k$-vector fields
$\bX=(X_\alpha)\in\vf^k(\Tan^1_kQ)$  such that
\beq
\sum_{\alpha=1}^k\inn(X_\alpha)\omega^\alpha_{L_{\rm EP}}=\d\cE_{L_{\rm EP}}\ .
\label{EPg}
\eeq
Writing
$$
X_\alpha=
\sum_{\sigma\leq\rho}\left( f_{\sigma\rho,\alpha}
\frac{\partial}{\partial g_{\sigma\rho}}+
f_{\sigma\rho\mu,\alpha}\frac{\partial}{\partial g_{\sigma\rho,\mu}}\right)
+f_{\beta\gamma,\alpha}^\nu
\frac{\partial}{\partial \Gamma^\nu_{\beta\gamma}}+
f^\nu_{\beta\gamma\mu,\alpha}\frac{\partial}{\partial \Gamma^\nu_{\beta\gamma,\mu}}\ ,
$$
the above equations are
\beq
\label{EP-EL}
-\Gamma^\alpha_{\beta\gamma,\mu} \frac{\partial{F^{\beta\gamma,\mu}_\alpha}}{\partial g_{\rho\sigma}}+\frac{\partial G}{\partial g_{\rho\sigma}}=0
\quad ; \quad
\sum_{\rho\leq\sigma}g_{\rho\sigma,\mu} \frac{\partial{F^{\beta\gamma,\mu}_\alpha}}{\partial g_{\rho\sigma}}+\frac{\partial G}{\partial \Gamma^\alpha_{\beta\gamma}}=0
\ .
\eeq

\noindent\textbf{First-generation dynamical constraints:}
Following the algorithm, generic $k$-vector fields solution to the equations \eqref{EPg} exist only on the points of the submanifold
$$
P_1=\big\{x\in\Tan^1_kQ\ \big|\ (\inn(Z)\d\cE_\cL)(x)=0,\ \mbox{\rm for every $Z\in[\fX^k(\Tan^1_kQ)]^\perp$} \big\} \ .
$$
Then, for a generic vector field 
$$
Z=
\sum_{\rho\leq\sigma}\left(f_{\rho\sigma}\frac{\partial}{\partial g_{\rho\sigma}}+
f_{\rho\sigma\mu}\frac{\partial}{\partial g_{\rho\sigma,\mu}}\right)+f^\alpha_{\beta\gamma}
\frac{\partial}{\partial \Gamma^\alpha_{\beta\gamma}}+
f^\alpha_{\beta\gamma\mu}\frac{\partial}{\partial \Gamma^\alpha_{\beta\gamma,\mu}}\ .
$$
the condition that
$Z\in[\fX^k(\Tan^1_kQ)]^\perp$ holds if, and only if,
$$
f_{\rho\sigma}\frac{\partial{F^{\beta\gamma,\mu}_\alpha}}{\partial g_{\rho\sigma}}=0\quad ,  \quad
f_{\beta\gamma}^\alpha\frac{\partial{F^{\beta\gamma,\mu}_\alpha}}{\partial g_{\rho\sigma}}=0\quad ;
\quad \mbox{\rm 
(for \ $0\leq\rho<\sigma\leq3$, $0\leq\alpha,\beta,\gamma\leq3$)} \ .
$$
These equations appear in \cite{GR-2018} (Proposition 3.7),
and their solutions are obtained giving
$f_{\rho\sigma}=0$ and $f^\alpha_{\beta\gamma}=C_\beta\delta^\alpha_\gamma+K^\alpha_{\beta\gamma}$, for functions $C_\beta$ and $K^\alpha_{\beta\gamma}$ such that $K^\alpha_{\alpha\gamma}=0$ and $K^\alpha_{\beta\gamma}+K^\alpha_{\gamma\beta}=0$. 
These last conditions can be rewritten in a more suitable way as follows: 
these functions $K^\alpha_{\beta\gamma}$ are a linear combination of the new functions
$$
S^\alpha_{\beta\gamma,\lambda\rho\nu}=\frac13g_{\lambda\nu}g_{\rho\beta}\delta^\alpha_\gamma-\frac13g_{\rho\nu}g_{\lambda\beta}\delta^\alpha_\gamma+\frac13g_{\rho\nu}g_{\lambda\gamma}\delta^\alpha_\beta-\frac13g_{\lambda\nu}g_{\rho\gamma}\delta^\alpha_\beta+g_{\lambda\beta}g_{\rho\gamma}\delta^\alpha_\nu-g_{\rho\beta}g_{\lambda\gamma}\delta^\alpha_\nu \ .
$$
Indeed, $S^\alpha_{\beta\gamma,\lambda\rho\nu}$ satisfies the conditions for any $\lambda,\rho$ and $\nu$, and $K^\alpha_{\beta\gamma}=\frac12K^\nu_{\sigma\tau}g^{\lambda\sigma}g^{\rho\tau} S^\alpha_{\beta\gamma,\lambda\rho\nu}$.

The coefficients $f_{\rho\sigma\mu}$ and $f^\alpha_{\beta\gamma\mu}$ 
do not contribute to originate constraints because the Lagrangian is affine. Moreover, $\inn\Big(C_\beta\delta^\alpha_\gamma\derpar{}{\Gamma^\alpha_{\beta\gamma}}\Big)\d\cE_\cL=0$, 
and nor do they produce constraints. Finally,
$$
\inn\Big(S^\alpha_{\beta\gamma,\lambda\rho\nu}\frac{\partial }{\partial \Gamma^\alpha_{\beta\gamma}}\Big)\d\cE_\cL=
g_{\lambda\mu}T^\mu_{\rho\nu}-g_{\rho\mu}T^\mu_{\lambda\nu}+\tfrac{1}{3}g_{\lambda\nu}T^\mu_{\mu\rho}-
\tfrac{1}{3}g_{\rho\nu}T^\mu_{\mu \lambda}\ ,
$$
where $T^\alpha_{\beta\gamma}$ are the components of the torsion tensor
which are defined as usual,
$T^\alpha_{\beta\gamma}=\Gamma^\alpha_{\beta\gamma}-\Gamma^\alpha_{\gamma\beta}$.
These are the first-generation dynamical constraints defining $P_1$,
which can be written in an equivalent way as
$$
(\zeta_1)_{\beta\gamma}^\alpha\equiv
T^\alpha_{\beta\gamma}-
\frac13\delta^\alpha_\beta T^\nu_{\nu\gamma}+\frac13\delta^\alpha_\gamma T^\nu_{\nu\beta}=0\ 
$$
and they are called  {\sl torsion constraints} in \cite{GR-2018}.

The general solution to the equation \eqref{EPg}
(before demanding the \sopde condition) are $k$-vector fields
$\mathbf{X}=(X_\nu)\in\vf^k(\Tan^1_kQ)$ with
\bea
\nonumber
X_\alpha&=&
\sum_{\sigma\leq\rho}\left(\Big(
g_{\sigma\lambda}\Gamma^\lambda_{\alpha\rho}+g_{\rho\lambda}\Gamma^\lambda_{\alpha\sigma}+
\frac{2}{3}g_{\sigma\rho}T^\lambda_{\lambda\nu}\Big)
\frac{\partial}{\partial g_{\sigma\rho}}+
f_{\sigma\rho\mu,\alpha}\frac{\partial}{\partial g_{\sigma\rho,\mu}}\right)
 \\ & &
+\Big(\Gamma^\lambda_{\alpha\gamma}\Gamma^\nu_{\beta\lambda}+
C^\nu_{\beta\gamma,\alpha}+K^\nu_{\beta\gamma,\alpha}\Big)
\frac{\partial}{\partial \Gamma^\nu_{\beta\gamma}}+
f^\nu_{\beta\gamma\mu,\alpha}\frac{\partial}{\partial \Gamma^\nu_{\beta\gamma,\mu}}
\quad ; \quad \mbox{\rm (on $P_1$)} \ ,
\label{mvfsolgen}
\eea
for some functions 
$C^\nu_{\beta\gamma,\mu}, K^\nu_{\beta\gamma,\mu}\in C^\infty(\Tan^1_kQ)$ satisfying that
$$
C^\nu_{\beta\gamma,\mu}=C_{\beta\mu}\delta^\nu_\gamma \quad,  \quad
K^\nu_{\nu\gamma\mu}=0 \quad , \quad
K^{\nu}_{\beta\gamma \nu}+K^\nu_{\gamma\beta \nu}=0 
\quad ; \quad \mbox{\rm (on $P_1$)} \  .
$$

\noindent\textbf{First-generation non-dynamical constraints:}
Following the algorithm, for every Lagrangian $k$-vector field $\bX$
given by \eqref{mvfsolgen},
we can take any $\bY=(Y_\nu)\in\fX^k(\Tan^1_kQ)$ 
such that $\bG=\bX+\bY$ is a \textsc{sopde}.
The simplest choice is
$$
Y_\alpha=\sum_{\sigma\leq\rho}
\Big(g_{\sigma\rho,\alpha}-g_{\sigma\lambda}\Gamma^\lambda_{\alpha\rho}-g_{\rho\lambda}\Gamma^\lambda_{\alpha\sigma}-
\frac{2}{3}g_{\sigma\rho}T^\lambda_{\lambda\alpha}\Big)
\frac{\partial}{\partial g_{\sigma\rho}}+
\Big(\Gamma^\nu_{\beta\gamma,\alpha}-\Gamma^\lambda_{\alpha\gamma}\Gamma^\nu_{\beta\lambda}-
C^\nu_{\beta\gamma,\alpha}-K^\nu_{\beta\gamma,\alpha}\Big)
\frac{\partial}{\partial \Gamma^\nu_{\beta\gamma}}\,.
$$
Then we have
$$
S_1=\big\{x\in\Tan^1_kQ\ \big|\ \big(\inn(Z)\bOm_\Lag(\bY)\big)(x)=0,\ \mbox{\rm for every $Z\in\fM$} \big\} \ ;
$$
where $\fM=\fX(\Tan^1_kQ)$.
For an affine Lagrangian this conditions is realized by \eqref{eq:ndGC} which, in this case, takes the form
\bea
\label{1ndc}
(\eta_1)^{\rho\sigma}&\equiv&-\Gamma^\alpha_{\beta\gamma,\mu} \frac{\partial{F^{\beta\gamma,\mu}_\alpha}}{\partial g_{\rho\sigma}}+\frac{\partial G}{\partial g_{\rho\sigma}}=0\ ;
\\
\label{2ndc}
(\eta_1)^{\beta\gamma}_{\alpha}&\equiv&\sum_{\rho\leq\sigma}g_{\rho\sigma,\mu} \frac{\partial{F^{\beta\gamma,\mu}_\alpha}}{\partial g_{\rho\sigma}}+\frac{\partial G}{\partial \Gamma^\alpha_{\beta\gamma}}=0\ .
\eea
As it was pointed out in the general case of affine Lagrangians,
these equations are just the field equations \eqref{EP-EL}
for \sopde $k$-vector fields and
the field equations are recovered as constraints of the theory.
For holonomic $k$-vector fields these equations
lead to the {\sl Einstein equations}.

The equations \eqref{1ndc} are the so-called {\sl connection constraints} in \cite{GR-2018}. 
The equations \eqref{2ndc},
called {\sl metric constrains} in \cite{GR-2018},
after taking into account the first-generation dynamical constraints can
be rewritten as
$$
(\eta_1)^{\beta\gamma}_{\alpha}\equiv - g_{\rho\sigma,\mu}+
g_{\sigma\lambda}\Gamma^\lambda_{\mu\rho}+
g_{\rho\lambda}\Gamma^\lambda_{\mu\sigma}+
\frac{2}{3}g_{\rho\sigma}T^\lambda_{\lambda\mu} \ ,
$$
and are then called {\sl pre-metricity constraints} in \cite{GR-2018}.
These constraints define
the submanifold $S_1\hookrightarrow~P_1$.

\noindent\textbf{Tangency conditions and further generation of constraints:}
The second generation constraints arise from the tangency condition of $\bG$ on 
the constraints $(\zeta_1)_{\beta\gamma}^\alpha$, and they are
$$
(\eta_2)^\alpha_{\beta\gamma,\nu}\equiv T^\alpha_{\beta\gamma,\nu}-
\frac13\delta^\alpha_\beta T^\mu_{\mu\gamma,\nu}+
\frac13\delta^\alpha_\gamma T^\mu_{\mu\beta,\nu}=0\ ;
$$
which are second generation \sopde constraints,
since $Y_\alpha((\zeta_1)_{\lambda\rho\nu})\not=0$.
The tangency condition on the other constraints does not lead to new constraints
(as they are non-dynamical constraints).
Thus we have obtained a new submanifold $S_2\hookrightarrow S_1$.

Finally, the tangency condition on the new constraints
$(\eta_2)^\alpha_{\beta\gamma,\nu}$ 
does not lead to new constraints because
they are also non-dynamical constraints.
In fact, this condition reads
$$
X_\lambda\Big(T^\alpha_{\beta\gamma,\nu}-
\frac13\delta^\alpha_\beta T^\mu_{\mu\gamma,\nu}+
\frac13\delta^\alpha_\gamma T^\mu_{\mu\beta,\nu}\Big)=
f^\alpha_{\beta\gamma\nu,\lambda}-\frac13\delta^\alpha_\beta f^\mu_{\mu\gamma\nu,\lambda}+
\frac13\delta^\alpha_\gamma f^\mu_{\mu\beta\nu,\lambda}=0
\ ; \ \mbox{\rm (on $S_2$)}  \ ,
$$
which are equations for the functions $f^\alpha_{\beta\gamma\mu,\lambda}$.

Hence $S_2$ is the final constraint submanifold 
$S_f\hookrightarrow\Tan^1_kQ$,
which is defined in $\Tan^1_kQ$ by the set of constraints
$$
(\zeta_1)_{\lambda\zeta\nu}=0 \quad , \quad
(\eta_1)^{\rho\sigma}=0 \quad , \quad
(\eta_1)^{\beta\gamma}_{\alpha}=0 \quad , \quad
(\eta_2)^\alpha_{\beta\gamma,\nu}=0 \ .
$$

\noindent\textbf{Integrability conditions:}
By completeness, the integrability conditions for the Einstein-Palatini model are
obtained by imposing that $[X_\alpha,X_\beta]=0$ (on $S_f$),
and they are
$$g_{\rho\gamma}\Gamma^\gamma_{[\nu\lambda}\Gamma^\lambda_{\mu]\sigma}+g_{\sigma\gamma}\Gamma^\gamma_{[\nu\lambda}\Gamma^\lambda_{\mu]\rho}+g_{\rho\lambda}\Gamma^\lambda_{[\mu\sigma,\nu]}+g_{\sigma\lambda}\Gamma^\lambda_{[\mu\rho,\nu]}+\frac23g_{\rho\sigma}T^\lambda_{\lambda[\mu,\nu]}=0\ .
$$
These are new {\sl integrability constraints} that define a new submanifold 
${\cal S}_f\hookrightarrow S_f$ where there are integral sections which are solution to the Einstein equations.
It can be checked that the tangency condition of $\bG$ holds on ${\cal S}_f$.
(See  \cite{GR-2018} for the details).


\section{Conclusions and outlook}

The first goal of this work has been to solve the second-order problem for singular field theories, 
generalizing the constraint algorithm of  \cite{MR} (for singular Lagrangian mechanical systems) 
to $k$-presymplectic Lagrangian systems and completing the results of \cite{GMR}. 
In particular, given a  $k$-symplectic Lagrangian system $(\Tan^1_kQ;\cL)$, 
we have developed an algorithm that produces the maximal submanifold of $\Tan^1_kQ$ 
on which the Lagrangian equation has solution and the solution can be chosen to be a \textsc{sopde}. 

The algorithm works as follows: 
First, we characterize the submanifold $P_1$
on which the Lagrangian equation has a solution. 
The constraints defining $P_1$ are called 
{\sl $k$-presymplectic} or {\sl dynamical constraints},
they arise from demanding the compatibility condition,
and can be chosen to be $\Leg$-projectable functions.
Second, the algorithm gives the submanifold $S_1$ of $P_1$ on which the Lagrangian $k$-vector fields can be chosen to be \textsc{sopde}s. 
This produces new constraints that are called {\sl \sopde} or {\sl non-dynamical constraints}. 
They arise from demanding the \sopde condition and are not $\Leg$-projectable. 

Next, the stability or tangency condition is imposed,
looking for the submanifold $S_2$ of $S_1$ 
where \sopde solutions can be chosen to be tangent to $S_1$.
Then, the tangency condition on the non-dynamical constraints
give no new constraints, but on the dynamical constraints
it can produce new constraints that can be classified as dynamical or non-dynamical,
depending on their nature: if they are related to demanding solutions to be \sopde $k$-vector fields
then they are non-dynamical and they are dynamical otherwise.
This last step is repeated until the algorithm stabilizes; that is, 
in the most favourable cases, until we have a submanifold $S_f$ 
on which the \sopde $k$-vector fields 
solutions to the $k$-presymplectic Lagrangian equations are tangent to it.
In each step, results similar to the previous ones are repeated.
In this way, the behaviour of the constraint algorithm 
for $k$-presymplectic Lagrangian systems in field theory
is exactly the same as for presymplectic Lagrangian systems in mechanics
\cite{BGPR,BGPR2,CLR,CLR2,MR}.

Finally, the $\Leg$-projectability and the integrability 
of these \sopde $k$-vector fields are additional conditions to be demanded
that can produce new constraints.

As a very interesting case, we have applied the algorithm 
to analyze the {\sl Einstein-Palatini} or {\sl metric-affine model} 
of General Relativity, which is very suitable to be studied 
using the $k$-symplectic formulation;
and we have compared the results achieved with those obtained
in the multisymplectic analysis of this model.
As it is an affine Lagrangian, we have analyzed previously
the general case of affine Lagrangians in field theory and, as a particular case, 
we have described an academical example.

As further research,
this constraint analysis can be implemented to analyze other 
extended models of General Relativity,
such as Lovelock, $f(R)$ or $f(T)$ theories.
It should be also interesting to do a similar study
of the second-order problem for the multisymplectic formulation
of classical field theories in general.


\subsection*{Acknowledgments}

We acknowledge the financial support from the Spanish
Ministerio de Ciencia, Innovaci\'on y Universidades project
PGC2018-098265-B-C33
and the Secretary of University and Research of the Ministry of Business and Knowledge of
the Catalan Government project 2017--SGR--932.


\addcontentsline{toc}{section}{\bf References}
\itemsep 0pt plus 1pt

{\small
\begin{thebibliography}{99}

\bibitem{AM-fm}
R. Abraham, J.E. Marsden,
 \emph{Foundations of Mechanics},
 ($2nd$ ed.), Benjamin/Cummings Publishing Co., Inc., Advanced Book Program, Reading, Mass., 1978.
(\url{https://doi.org/10.1090/chel/364}).

\bibitem{AB-51}
J.~L. Anderson, P.~G. Bergmann,
``Constraints in covariant field theories'',
{\sl Phys. Rev.} {\bf 83} (1951) 1018--1025.
(\url{https://doi.org/10.1103/PhysRev.83.1018}).

\bibitem{Ar}
V.I. Arnol'd,
 \emph{Mathematical Methods of Classical Mechanics},
 ($2nd$ ed.), {\sl Graduate Texts in Mathematics} {\bf 60}.
 Springer-Verlag, New York, 1989.
(\url{https://doi.org/10.1007/978-1-4757-2063-1}).

\bibitem{Awane}
A. Awane. ``$k$-symplectic structures'',
{\sl J. Math. Phys.} {\bf 33}(12) (1992) 4046–4052.
(\url{https://doi.org/10.1063/1.529855}).

\bibitem{BGPR}
{C. Batlle, J. Gomis, J.M. Pons, N. Rom\'an-Roy},
``Equivalence between the Lagrangian and Hamiltonian formalism for constrained systems'',
{\sl J. Math. Phys.} \textbf{27}(12) (1986) 2953--2962.
(\url{https://doi.org/10.1063/1.527274}).

\bibitem{BGPR2}
{C. Batlle, J. Gomis, J.M. Pons, N. Rom\'an-Roy},
``Lagrangian and Hamiltonian Constraints'', 
{\sl Let. Math. Phys.} {\bf 13}(1) (1987) 17--23.
(\url{hhttps://doi.org/10.1007/BF00570763}).

\bibitem{Capriotti}
S. Capriotti,
``Differential geometry, Palatini gravity and reduction'',
{\sl J. Math. Phys.} {\bf 55}(1) (2014) 012902.
(1987) 315--334.
(\url{https://doi.org/10.1063/1.4862855}).

\bibitem{Capriotti2}
S. Capriotti,
``Unified formalism for Palatini gravity'',
{\sl Int. J. Geom. Meth. Mod. Phys.} {\bf 15}(3) (2018) 1850044.
(1987) 315--334.
(\url{https://doi.org/10.1142/S0219887818500445}).

\bibitem{CLR}
{J.F. Cari\~nena, C. L\'opez, N. Rom\'an-Roy}, 
``Geometric study of the connection between the Lagrangian and Hamiltonian constraints'',
{\sl J. Geom. Phys.}  \textbf{4}(3) (1987) 315--334.
(\url{https://doi.org/10.1016/0393-0440(87)90017-9}).

\bibitem{CLR2}
{J.F. Cari\~nena, C. L\'opez, N. Rom\'an-Roy}, 
``Origin of the Lagrangian constraints and their relation with the Hamiltonian formulation'',
{\sl J. Math. Phys.}  \textbf{29}(5) (1987) 1143-1149.
(\url{https://doi.org/10.1063/1.527955}).

\bibitem{chi94}
D.~Chinea, M.~de~Le\'on, J.~C. Marrero,
``The constraint algorithm for time-dependent Lagrangians'',
{\sl J. Math. Phys.} {\bf 35}(7) (1994) 3410--3447.
(\url{https://doi.org/10.1063/1.530476}).

\bibitem{Crampin}
M. Crampin,
 ``Tangent bundle geometry for Lagrangian dynamics'',
{\sl J. Phys. A: Math. Gen.} {\bf 16}(16) (1983) 3755--3772.
(\url{https://doi.org/10.1088/0305-4470/16/16/014}).

\bibitem{dLe96B}
M.~de~Le\'on, J.~Mar\'{\i}n-Solano, J.~C. Marrero,
``A geometrical approach to classical field theories: A constraint algorithm for singular theories'',
In \emph{New Developments in Differential Geometry}, Springer, Netherlands, \textbf{350} (1996) 291--312.
(\url{https://doi.org/10.1007/978-94-009-0149-0_22}).

\bibitem{dLe05}
M.~de~Le\'on, J.~Mar\'{\i}n-Solano, J.~C. Marrero, M.~C. Mu\~noz-Lecanda, N.~Rom\'an-Roy,
``Pre-multisymplectic constraint algorithm for field theories'',
{\sl Int. J. Geom. Meth. Mod. Phys.} {\bf 2}(5) (2005) 839--871.
(\url{https://doi.org/10.1142/S0219887805000880}).

\bibitem{dLe02}
M.~de~Le\'on, J.~Mar\'{\i}n-Solano, J.~C. Marrero, M.~C. Mu\~noz-Lecanda, N.~Rom\'an-Roy,
``Singular {L}agrangian systems on jet bundles'',
{\sl Fortschr. Phys.} {\bf 50}(2) (2002) 105--169.
(\url{https://doi.org/10.1002/1521-3978(200203)50:2<105::AID-PROP105>3.0.CO;2-N}).

\bibitem{mt2}
M. de Le\'{o}n, I. M\'{e}ndez, M. Salgado,
``Integrable $p$--almost tangent structures and tangent bundles of
$p^1$-ve\-lo\-ci\-ties'', {\sl Acta Math. Hungar.} {\bf 58}(1-2)
(1991) 45-54.
(\url{https://doi.org/10.1007/BF01903546}).

\bibitem{dLSV}
M.~de~Le\'on, M.~Salgado, S.~Vilari\~no,
{\it Methods of Differential Geometry in Classical Field Theories: $k$-Symplectic and $k$-Cosymplectic Approaches},
World Scientific, Hackensack, 2016.
(\url{https://doi.org/10.1142/9693}).

\bibitem{Di-50}
P.A.M.~Dirac,
``Generalized {H}amiltonian dynamics'',
{\sl Can. J. Math.} {\bf 2} (1950) 129--148.
(\url{https://doi.org/10.4153/CJM-1950-012-1}).

\bibitem{Einstein}
A. Einstein, 
``Einheitliche Fieldtheorie von Gravitation und Elektrizit\"at'', 
\textsl{Pruess. Akad.Wiss.} {\bf 414}, (1925); A. Unzicker and T. Case, 
``Translation of Einstein's attempt of a unified field theory with teleparallelism'',
arXiv:physics/0503046.

\bibitem{FFR-82}
M. Ferraris, M. Francaviglia, C. Reina,
``Variational Formulation of General Relativity from 1915 to 1925 'Palatini's Method' Discovered by Einstein in 1925'',
{\sl Gen. Rel. Grav.} {\bf 14}(3) (1982) 243-–254. (\url{https://doi.org/10.1007/BF00756060}).

\bibitem{GR-2016}
 J. Gaset, N. Rom\'an--Roy, 
``Order reduction, projectability and constraints of second--order field theories and higher-order mechanics'',
{\sl Rep. Math. Phys.} {\bf 78}(3) (2016), 327--337
(\url{https://doi.org/10.1016/S0034-4877(17)30012-5}).

\bibitem{GR-2018}
J. Gaset, N. Rom\'an-Roy,
``New multisymplectic approach to the Metric-Affine (Einstein-Palatini) action for gravity'',
{\sl J. Geom. Mech.} {\bf 11}(3) (2019) 361--396. 
(\url{https://doi.org/10.3934/jgm.2019019}).

\bibitem{got79}
M.~J. Gotay, J.~M. Nester,
``Presymplectic {L}agrangian systems~{I}: The constraint algorithm and the equivalence theorem'',
{\sl Ann. Inst. Henri Poincar\'e} {\bf 30}(2) (1979) 129--142.
(\url{http://www.numdam.org/item?id=AIHPA_1979__30_2_129_0}).

\bibitem{got80}
M.~J. Gotay, J.~M. Nester,
``Presymplectic Lagrangian systems. II : the second-order equation problem'',
{\sl Ann. Inst. Henri Poincar\'e} {\bf 32}(1) (1980) 1--13.
(\url{http://www.numdam.org/item=AIHPA_1980__32_1_1_0}).

\bibitem{got78}
M.~J. Gotay, J.~M. Nester, G.~Hinds,
``Presymplectic manifolds and the {D}irac-{B}ergmann theory of constraints'',
{\sl J. Math. Phys.} {\bf 19}(11) (1978) 2388--2399.
(\url{https://doi.org/10.1063/1.523597}).

\bibitem{GM2005}
X.~Gr\`acia, R.~Mart\'{\i}n,
``Geometric aspects of time-dependent singular differential equations'',
{\sl Int. J. Geom. Methods Mod. Phys.} {\bf 2}(4) (2005) 597--618.
(\url{https://doi.org/10.1142/S0219887805000697}).

\bibitem{GMR}
 X.~Gr\`acia, R.~Mart\'{\i}n, N.~Rom\'an-Roy,
``Constraint algorithm for $k$-presymplectic {H}amiltonian systems: Application to singular field theories'',
{\sl Int. J. Geom. Methods Mod. Phys.} {\bf 6}(5) (2009) 851--872.
(\url{https://doi.org/10.1142/S0219887809003795}).

\bibitem{GP-92}
X.~Gr\`acia, J.M. Pons,
\newblock{A generalized geometric framework for constrained systems,}
\newblock \emph{Diff. Geom. Appl.}, {\bf 2} (1992) 223--247.
(\url{https://doi.org/10.1016/0926-2245(92)90012-C}).
 
\bibitem{GRR}
 X.~Gr\`acia, X. Rivas, N.~Rom\'an-Roy,
``Constraint algorithm for singular field theories in the k-cosymplectic framework'',
{\sl J. Geom. Mech.} {\bf 12}(1) (2020) 1--23.
(\url{https://doi.org/10.3934/jgm.2020002}).

\bibitem{Gu-87}
C. G\"unther,
``The polysymplectic Hamiltonian formalism in the f\/ield theory and the calculus of variations. I.~The local case'', 
{\sl J. Diff. Geom.} {\bf 25} (1987) 23--53.
(\url{https://doi.org/10.4310/jdg/1214440723}).

\bibitem{LM-sgam}
P. Libermann, C.M. Marle,
\emph{Symplectic Geometry and Analytical Mechanics},
Mathematics and its Applications {\bf 35}. D. Reidel Publishing Co., Dordrecht, 1987.
(\url{https://doi.org/10.1007/978-94-009-3807-6}).

\bibitem{MMT-97}
G.~Marmo, G.~Mendella, W.~M. Tulczyjew,
``Constrained Hamiltonian systems as implicit differential equations'',
{\sl J. Phys. A} {\bf 30} (1997) 277--293.
(\url{https://doi.org/10.1088/0305-4470/30/1/020}).

\bibitem{fam}
F. Munteanu, A.M. Rey, M. Salgado,
``The G\"{u}nther's formalism in classical field theory: momentum map and reduction'', 
{\sl J. Math. Phys.} {\bf 45}(5) (2004) 1730--1751.
(\url{https://doi.org/10.1063/1.1688433}).

\bibitem{MR}
M.~C. Mu\~noz-Lecanda, N.~Rom\'an-Roy,
``Lagrangian theory for presymplectic systems'',
{\sl Ann. Inst. Henry Poincar\'e: Phys. Theor.} {\bf 57} (1992) 27--45.
(\url{http://www.numdam.org/item?id=AIHPA_1992__57_1_27_0}).

\bibitem{MRosado-2013}
J. Mu\~noz-Masqu\'e, M.E. Rosado.
``Diffeomorphism-invariant covariant Hamiltonians of a pseudo-Riemannian metric and a linear connection'',
{\sl Adv. Theor. Math. Phys.} {\bf 16}(3) (2012) 851–886.
(1987) 315--334.
(\url{https://oa.upm.es/15643}).

\bibitem{Pa-19}
A. Palatini, 
``Deduzione invariantiva delle equazioni gravitazionali dal principio di Hamilton''
{\sl Rend. Circ. Mat. Palermo} {\bf 43} (1919) 203--212, 
(\url{https://doi.org/10.1007/BF03014670}).

\bibitem{RRS}
A.M. Rey, N. Rom\'{a}n-Roy, M. Salgado,
``G\"{u}nther's formalism in classical
field theory: Skinner-Rusk approach and the evolution operator'',
{\sl J. Math. Phys.} {\bf 46} (2005) 052901.
(\url{https://oa.upm.es/15643}).

\bibitem{RomRoy}
N. Rom\'an-Roy,
``Multisymplectic Lagrangian and Hamiltonian formalisms of classical field theories'',
{\sl Symm. Integ. Geom. Meth. Appl. (SIGMA)} {\bf 5}
(2009) 100.
(\url{https://doi.org/10.3842/SIGMA.2009.100}).

\bibitem{SCC-84}
W. Sarlet, F. Cantrijn,  M. Crampin,
``A new look at second-order equations and Lagrangian mechanics'', 
{\sl J. Phys. A Math. Gen.} {\bf 17}(10) (1984) 1999–2009. 
(\url{https://doi.org/10.1088/0305-4470/17/10/012}).

\bibitem{Sa-89}
D. J. Saunders,
{\it The Geometry of Jet Bundles},
London Math. Soc. Lect. Notes Ser.
{\bf 142}, Cambridge, Univ. Press, 1989.
(\url{https://doi.org/10.1017/CBO9780511526411}).

\end {thebibliography}
}

\end{document}